\documentclass[twocolumn,secnumarabic,nobibnotes,superscriptaddress,aps,nofootinbib,10pt,pra]{revtex4-2}
\usepackage{color, xcolor, colortbl}
\usepackage{graphicx}
\usepackage{amsmath,amssymb,amsthm,amsfonts}
\usepackage{mathtools}
\usepackage[ruled,vlined,linesnumbered]{algorithm2e}
\usepackage{algpseudocode}
\usepackage{dcolumn}
\usepackage{epstopdf}
\usepackage{bm}
\usepackage{appendix}
\usepackage{multirow}
\usepackage{braket}
\usepackage[english]{babel}
\usepackage[T1]{fontenc}
\usepackage{xpatch}
\usepackage{tikz}
\usepackage{adjustbox}
\usepackage{xspace}
\usepackage[roman]{complexity}
\usepackage{xparse}
\usepackage{comment}
\usepackage{balance}
\usetikzlibrary{decorations.markings, arrows.meta}
\usepackage{hyperref}
\usepackage[capitalize]{cleveref}

\theoremstyle{plain}
\newtheorem{theorem}{Theorem}

\newtheorem{lemma}[theorem]{Lemma}

\newtheorem{problem}{Problem}
\theoremstyle{definition}

\theoremstyle{remark}

\crefalias{section}{appendix}

\newcommand{\eqn}[1]{\hyperref[eqn:#1]{(\ref*{eqn:#1})}}
\newcommand{\rem}[1]{\hyperref[rem:#1]{Remark~\ref*{rem:#1}}}
\newcommand{\thm}[1]{\hyperref[thm:#1]{Theorem~\ref*{thm:#1}}}
\newcommand{\cor}[1]{\hyperref[cor:#1]{Corollary~\ref*{cor:#1}}}
\newcommand{\defn}[1]{\hyperref[defn:#1]{Definition~\ref*{defn:#1}}}
\newcommand{\lem}[1]{\hyperref[lem:#1]{Lemma~\ref*{lem:#1}}}
\newcommand{\prop}[1]{\hyperref[prop:#1]{Proposition~\ref*{prop:#1}}}
\newcommand{\fig}[1]{\hyperref[fig:#1]{Figure~\ref*{fig:#1}}}
\newcommand{\tab}[1]{\hyperref[tab:#1]{Table~\ref*{tab:#1}}}
\newcommand{\algo}[1]{\hyperref[algo:#1]{Algorithm~\ref*{algo:#1}}}
\renewcommand{\sec}[1]{\hyperref[sec:#1]{Section~\ref*{sec:#1}}}
\newcommand{\append}[1]{\hyperref[append:#1]{Appendix~\ref*{append:#1}}}
\newcommand{\prob}[1]{\hyperref[prob:#1]{Problem~\ref*{prob:#1}}}
\newcommand{\assump}[1]{\hyperref[assump:#1]{Assumption~\ref*{assump:#1}}}

\newcommand{\norm}[1]{\left\lVert#1\right\rVert}

\newcommand{\Or}{\mathcal{O}}

\renewcommand{\d}{\mathrm{d}}
\newcommand{\bvec}[1]{\mathbf{#1}}
\newcommand{\spec}{\mathrm{Spec}}

\newcommand{\ve}{\bvec{e}}

\newcommand{\vp}{\bvec{p}}
\newcommand{\vq}{\bvec{q}}

\newcommand{\diag}{\operatorname{diag}}
\renewcommand{\Re}{\operatorname{Re}}
\renewcommand{\Im}{\operatorname{Im}}

\newcommand{\Tr}{\operatorname{Tr}}

\newcommand{\sinsv}{\sin^{\mathrm{SV}}}

\newcommand{\mc}[1]{\mathcal{#1}}

\newcommand{\NN}{\mathbb{N}}
\newcommand{\RR}{\mathbb{R}}
\newcommand{\CC}{\mathbb{C}}

\renewcommand{\AA}{\mathcal{A}}
\newcommand{\GG}{\mathcal{G}}
\newcommand{\HH}{\mathcal{H}}
\renewcommand{\SS}{\mathcal{S}}

\newcommand{\ketbra}[2]{\mathinner{|{#1}\rangle\!\langle{#2}|}}

\newcommand{\sbraket}[1]

\newclass{\PreciseQMA}{PreciseQMA}
\newcommand{\prlsection}[1]{\vspace{1em}\paragraph*{#1---}}

\newcommand{\DeptMath}{Department of Mathematics, University of California, Berkeley, CA 94720, USA}
\newcommand{\LBLMath}{Applied Mathematics and Computational Research Division, Lawrence Berkeley National Laboratory, Berkeley, CA 94720, USA}
\newcommand{\Simons}{Simons Institute for the Theory of Computing, University of California, Berkeley, CA 94720, USA}
\newcommand{\MarylandMath}{Department of Mathematics, University of Maryland, College Park, MD 20742, USA}
\newcommand{\MarylandCS}{Department of Computer Science, University of Maryland, College Park, MD 20742, USA}
\newcommand{\MarylandQuICS}{Joint Center for Quantum Information and Computer Science, University of Maryland, College Park, MD 20742, USA}

\begin{document}

\title{Towards End-to-End Quantum Estimation of Non-Hermitian Pseudospectra}

\author{Gengzhi Yang}
\thanks{These authors contributed equally to this work.}
\affiliation{\MarylandMath}
\affiliation{\MarylandQuICS}
\author{Jiaqi Leng}
\thanks{These authors contributed equally to this work.}
\affiliation{\Simons}
\affiliation{\DeptMath}
\author{Xiaodi Wu}
\affiliation{\MarylandQuICS}
\affiliation{\MarylandCS}
\author{Lin Lin}
\thanks{linlin@math.berkeley.edu}
\affiliation{\DeptMath}
\affiliation{\LBLMath}

\date{Latest revision: \today}

\begin{abstract}
Non-Hermitian many-body systems can be spectrally unstable, so small perturbations may induce large eigenvalue shifts. The pseudospectrum quantifies this instability and provides a perturbation-robust diagnostic. For inverse-polynomially small $\epsilon$, we show that deciding whether a point $z\in\CC$ is $\epsilon$-close to the spectrum is $\PSPACE$-hard for $5$-local operators, whereas deciding whether $z$ lies in the $\epsilon$-pseudospectrum is $\QMA$-complete for $4$-local operators. This identifies pseudospectrum membership as a natural computational target. We then present a concrete end-to-end quantum framework for deciding pseudospectrum membership, which combines a singular-value estimation step with a dissipative state preparation algorithm. Our Quantum Singular-value Gaussian-filtered Search (QSIGS) combines quantum singular value transformation (QSVT) with classical post-processing to achieve Heisenberg-limited query scaling for singular-value estimation. To prepare suitable input states, we introduce an algorithmic Lindbladian protocol for approximate ground right singular vectors and prove its effectiveness for the Hatano--Nelson model. Finally, we demonstrate the full pipeline on a trapped-ion quantum computer and distinguish points inside and outside the target pseudospectrum near the exceptional point of a minimal non-Hermitian qubit model.
\end{abstract}
\maketitle

\prlsection{Introduction}

While quantum algorithms for Hermitian spectral estimation are extensively studied~\cite{kitaev2002classical,Somma2019,OBrien2019,lin2022heisenberg,wan2022randomized,ni2023low,ding2023even,ding2024quantum,alase2024resolvent,castaldo2025heisenberg}, corresponding algorithms for non-Hermitian problems have appeared only recently~\cite{shao2022computing,zhang2024exponential,lin2025quantum,zhang2025heisenberg,low2026quantum}. Non-Hermitian operators arise as effective descriptions of dissipative, driven, and open systems, and they underlie phenomena such as $\mathcal{PT}$-symmetry breaking, non-Hermitian skin effects, and exceptional points~\cite{yao2018edge,ashida2020non,dora2020quantum,gopalakrishnan2021entanglement,zhang2025observation,bender1998real,bender1999PT,el2018non,zhang2022review,heiss2012physics,kawabata2019symmetry,borgnia2020non}. For non-normal operators, eigenvalues can be highly sensitive to perturbations~\cite{trefethen2005spectra}, so in the presence of noise and modeling error it can be difficult to distinguish the exact spectrum from nearby points with comparably large response~\cite{okuma2021non}.

Are there fundamental obstacles to computing non-Hermitian spectra on quantum computers? In this Letter, we first show that for a local non-Hermitian operator $A$, deciding whether a point $z\in\CC$ is $\epsilon$-close to the spectrum is $\PSPACE$-hard in the worst case. We therefore turn to the \emph{pseudospectrum}~\cite{trefethen2005spectra}, which is defined using the resolvent $(A-zI)^{-1}$ and records where small perturbations can induce large spectral shifts. Since $z\in\Lambda_\epsilon(A)$ if and only if the smallest singular value of $A-zI$ is at most $\epsilon$, pseudospectrum membership reduces to singular-value estimation. We prove that this problem is $\QMA$-complete for inverse-polynomially small $\epsilon$, which places it in the same complexity class as estimating the ground state energy of local Hamiltonian problems~\cite{kitaev2002classical}. This identifies pseudospectrum membership as a natural computational target for non-Hermitian many-body physics.
\begin{figure}
    \centering
    \includegraphics[width=0.45\textwidth]{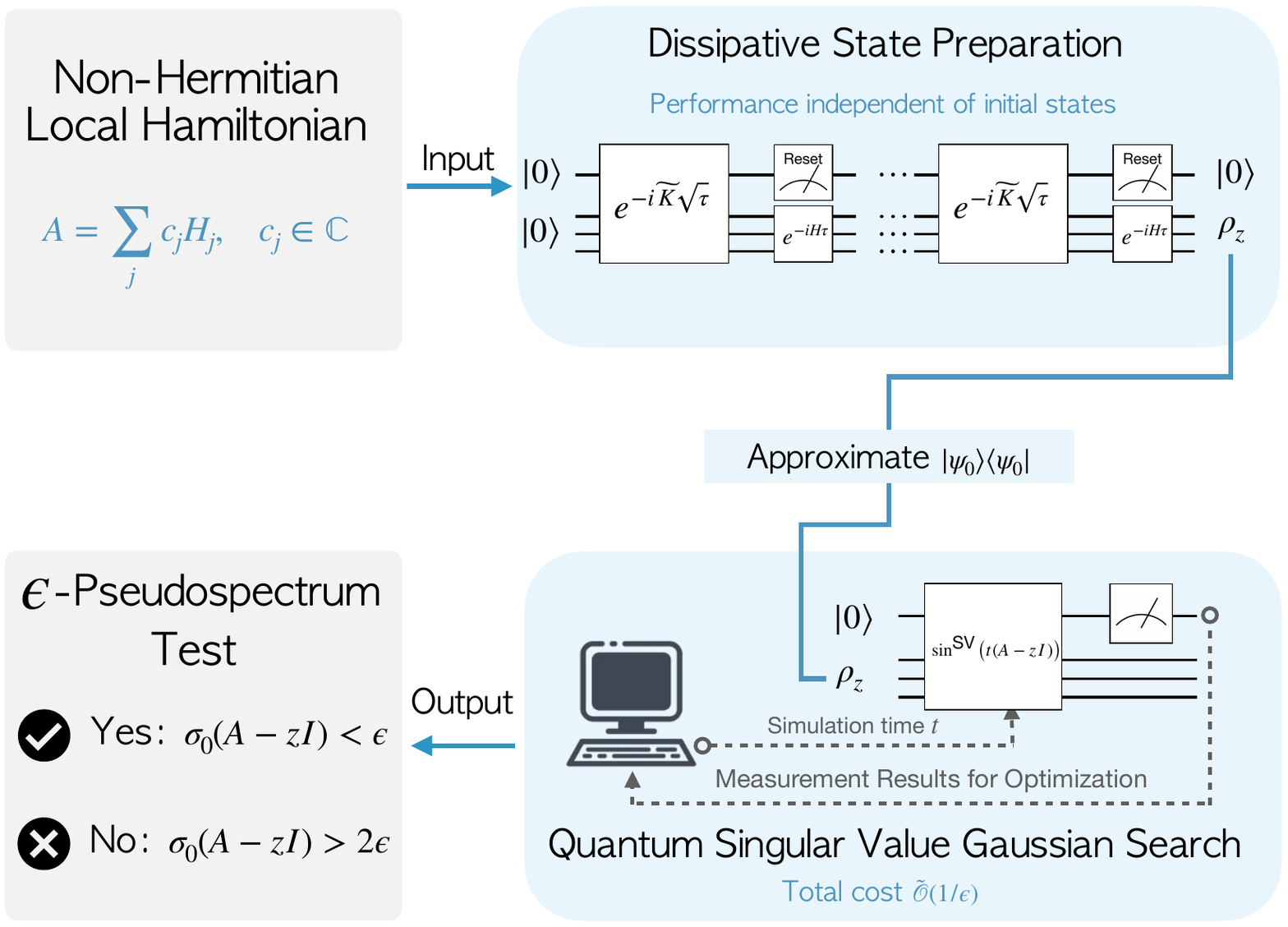}
    \caption{An end-to-end approach for estimating non-Hermitian pseudospectra on a quantum computer. }
    \label{fig:schematic}
\end{figure}

Can one design a concrete end-to-end pipeline for pseudospectrum estimation? We develop a two-pronged approach toward this goal (\cref{fig:schematic}). The first component is Quantum Singular-value Gaussian-filtered Search (QSIGS), which combines quantum singular value transformation (QSVT)~\cite{gilyen2019quantum} with classical post-processing to estimate the smallest singular value with Heisenberg-limited query scaling. The second component is an algorithmic Lindbladian protocol that dissipatively prepares approximate ground right singular vectors, addressing the need for an initial state with non-negligible overlap with the target singular vector~\cite{low2026quantum,zhang2024exponential,zhang2025heisenberg}. For this preparation step, we prove convergence for a continuous-time Lindbladian in the Hatano--Nelson model at $z=0$, and we give structural justification together with model-specific numerical evidence for in broader settings.

Finally, are such protocols simple enough to be implemented on early fault-tolerant machines? As a proof of principle demonstration, we implement the full pipeline on a trapped-ion quantum computer to probe the pseudospectrum near the exceptional point of the qubit Hamiltonian $H(g)=X-igZ$~\cite{el2018non}. The experiment correctly distinguishes points inside and outside the target $\epsilon$-pseudospectrum, showing that the end-to-end procedure already captures the relevant spectral instability in a minimal setting.

\prlsection{Spectra and pseudospectra}
For $\epsilon>0$, the $\epsilon$-pseudospectrum of an operator $A$ is defined by $\Lambda_\epsilon(A):=\{z\in\CC\colon \|(A-zI)^{-1}\|\ge \epsilon^{-1}\}$. 
Let $A_z=A-zI$. If $z$ is an eigenvalue of $A$, then $z\in\Lambda_\epsilon(A)$ for every $\epsilon>0$. The pseudospectrum therefore enlarges the spectrum by including not only exact eigenvalues but also nearby points where the resolvent $A_z^{-1}$ is large. 

For our purposes, the key observation is that $z\in\Lambda_\epsilon(A)$ if and only if $\sigma_0(A_z)\le \epsilon$, where $\sigma_0(\cdot)$ denotes the smallest singular value.  Hence the promise problem of deciding 
\begin{equation}\label{eqn:decision}
\mbox{whether}\quad \sigma_0(A_z)\le \epsilon \quad \mbox{or} \quad \sigma_0(A_z)\ge 2\epsilon 
\end{equation}
is exactly a singular-value estimation problem at precision $\epsilon$. This is the computational formulation used throughout the Letter. 

\prlsection{Complexity-theoretic separation}

We work with local non-Hermitian operators built from few-body terms. Given such an operator $A$ and a point $z\in\CC$, one can ask two closely related but distinct questions: does $z$ lie in the $\epsilon$-pseudospectrum, or is $z$ within distance $\epsilon$ of an actual eigenvalue? For inverse-polynomially small $\epsilon$, we show that pseudospectrum membership is $\QMA$-complete for $4$-local operators, whereas deciding whether $z$ is $\epsilon$-close to the spectrum is $\PSPACE$-hard already for $5$-local operators.

The $\QMA$-completeness of pseudospectrum membership is conceptually natural. By the singular-value characterization above, $\sigma_0(A_z)^2$ is the ground energy of the Hermitian operator $H_z=A_z^\dagger A_z$. Thus, deciding whether $z\in\Lambda_\epsilon(A)$ is a local-Hamiltonian-type problem. The spectrum problem is harder for a specifically non-Hermitian reason. Strongly non-normal matrices can amplify an exponentially small energy scale into an order-one spectral displacement. It is known that estimating spectra of a local Hamiltonian to exponential accuracy is $\PSPACE$-complete, and deciding spectral proximity is therefore $\PSPACE$-hard. See~\cref{append:hardness-non-hermitian-hamiltonian} for the formal statements and proofs.
This separation identifies pseudospectrum membership, rather than exact spectral proximity, as a natural target for quantum algorithms on non-Hermitian many-body systems.

\prlsection{Quantum singular-value Gaussian-filtered search (QSIGS)}

For Hermitian problems, starting from an initial state $\rho_0$ (pure or mixed), various phase-estimation methods~\cite{lin2022heisenberg,wan2022randomized,ni2023low,ding2023even,ding2024quantum} extract spectral information from the correlation function $Z(t)=\Tr[\rho_0 e^{-iHt}]$ and recover eigenvalues from its Fourier transform. For instance, the QMEGS algorithm~\cite{ding2024quantum} achieves Heisenberg-limited scaling by sampling $t$ from a Gaussian distribution and moving the spectral reconstruction to classical post-processing.

We now adapt this viewpoint to the singular values of a non-Hermitian operator. Consider the task of estimating the smallest singular value of $A_z$. A direct reduction to the Hermitian matrix $H_z$ is not satisfactory: its eigenvalues are $\sigma_j^2$, so estimating a small singular value $\sigma_0$ to additive accuracy $\epsilon$ requires resolving $\sigma_0^2$ to accuracy $\Or(\epsilon^2)$, which leads to query complexity $\Or(\epsilon^{-2})$. One can avoid this squaring loss by passing to the Hermitian dilation $\begin{psmallmatrix}0 & A_z^{\dagger} \\ A_z & 0\end{psmallmatrix}$, whose eigenvalues are $\pm \sigma_j$. This restores Heisenberg scaling, but for the block-encoding model it requires controlled access to the encoding unitary.

QSIGS instead works directly with a block-encoding of $A_z$. Using QSVT, we implement the singular-value transformation $\sinsv(t A_z)$, whose action depends on each singular value through $\sin(t\sigma_j)$. Measuring the ancilla register then produces a bounded random variable whose expectation is a linear combination of $\cos(2\sigma_j t)$, with coefficients determined by the overlaps of the input state with the right singular vectors. Sampling $t$ from a Gaussian distribution and forming the corresponding filter function therefore converts singular-value estimation into a one-dimensional classical peak-finding problem: the filter develops Gaussian peaks near the singular values. If the target state is $\ket{\psi}$ and the input is $\rho_0$ with overlap $p_0=\Tr[\rho_0 \ketbra{\psi}{\psi}]$, then when $p_0$ is sufficiently large, the maximizer of the filter identifies the desired singular value, and in this case $\sigma_0$. In particular, QSIGS gives a Heisenberg-limited singular-value estimation primitive in the dominant-overlap regime relevant to our pipeline. This achieves the Heisenberg-limited scaling, while avoiding controlled use of the block-encoding. \cref{append:qsigs} gives the algorithm and formal guarantees.

\prlsection{Dissipative state preparation}\label{sec:NHEP-DD} 

QSIGS, like other filtering-based spectral algorithms, is useful only when the input state has non-negligible overlap with the target singular vector. Specifically, achieving constant success probability generally requires $\Omega(1/p_0)$ repetitions, or $\Omega\left(1/\sqrt{p_0}\right)$ with amplitude amplification. We therefore seek a preparation method that yields appreciable overlap with the ground right singular vector of $A_z$ across many values of $z$.

We use dissipative state preparation for this purpose~\cite{verstraete2009quantum,RoyChalkerGornyiEtAl2020,ChenKastoryanoBrandaoEtAl2025,DingLiLin2025,ding2024single,LiZhanLin2025,Lloyd2025Quasiparticle,ZhanDingHuhnEtAl2026}. Rather than variationally tuning an ansatz for each shifted operator $A_z$~\cite{xie2024variational,zhang2025observation,zhao2023universal}, we construct an algorithmic Lindbladian whose fixed-point space contains the ground-state subspace of the positive semidefinite Hamiltonian $H_z$. Any pure state in that subspace is a right singular vector associated with the smallest singular value of $A_z$.

The construction starts from a collection of coupling operators $\{O_a\}$ and the Lindblad master equation
\begin{equation}\label{eqn:lindbladian}
    \dot{\rho} = \mathcal{L} [\rho] \coloneqq -i [H_z, \rho] + \sum_a K_a\rho K^\dag_a - \frac{1}{2}\{K^\dag_a K_a, \rho\},
\end{equation}
where the jump operators $\{K_a\}$ are defined by the weighted operator Fourier transform
\begin{equation}
    K_a = \int_{\mathbb{R}} f(s) e^{i H_z s} O_a e^{-i H_z s} \mathrm{d}s.\label{eqn:energy-basis-jump}
\end{equation}
Here $f$ is chosen so that its Fourier transform satisfies $\hat{f}(\omega)=0$ for $\omega \ge 0$. In the energy eigenbasis of $H_z$, this means that the jumps only permit energy-lowering transitions and suppress energy-increasing ones. As a result, the ground state of $H_z$ is a stationary state of the Lindbladian. 

In practice, we implement not the continuous-time flow itself but a discrete-time completely positive map obtained by composing Hamiltonian evolution under $H_z$ with a Stinespring realization of the dissipative step. Because the jump operators annihilate the ground-state subspace of $H_z$, this map preserves the same fixed points for any step size $\tau$. This keeps the circuit primitive shallow. \cref{append:circuit-implementation} describes the discrete-time implementation of this dynamics. After simulating for some time $t_{\rm sim}$, we obtain a state $\rho_z=\rho(t_{\rm sim})$ with large overlap with the target singular vector, which is then used as the initial state for QSIGS.

\prlsection{Application to the Hatano--Nelson model}

The efficiency of dissipative preparation is governed by its mixing time, namely the time required for the Lindbladian dynamics to concentrate on the target ground-state subspace. This depends on both the shifted Hamiltonian $H_z$ and the chosen couplings $\{O_a\}$, and is generally difficult to analyze. The Hatano--Nelson model~\cite{hatano1996localization,hatano1998non} provides a useful test case because its non-Hermitian structure is simple enough to admit both rigorous analysis and informative numerics.

The model describes a one-dimensional lattice with asymmetric nearest-neighbor hopping,
\begin{equation}
    H_{\text{HN}} = \sum_{j} (J + \gamma) \ketbra{j+1}{j} + (J - \gamma)\ketbra{j}{j+1},
\end{equation}
where $J\pm \gamma$ are the hopping amplitudes and the non-Hermiticity arises from the non-reciprocity $\gamma\neq 0$. It is a canonical example of the non-Hermitian skin effect, and its spectrum is highly sensitive to boundary conditions.
Under open boundary conditions (OBC), the eigenvalues of $H_{\rm HN}$ are given by $E^{(j)}_{\rm OBC} = 2\sqrt{J^2-\gamma^2}\cos(\frac{j\pi}{n+1})$, $j=1,2,\dots,n$, where $n$ is the total number of sites.
Under periodic boundary conditions (PBC), the eigenvalues lie on an ellipse that encloses the OBC spectrum in the complex plane~\cite{feng2025numerical}: $E^{(j)}_{\rm PBC} = 2J\cos(2j\pi/n) -2i\gamma \sin(2j\pi/n)$, $j=1,2,\dots, n$.
We illustrate both spectra in~\cref{fig:HN}(a).

We focus on preparing the ground right singular vector of $A_z = H^{\rm PBC}_{\rm HN}-zI$. For $z=0$, the associated Hermitian problem $H_0 = H^\dagger_{\rm HN} H_{\rm HN}$ is analytically tractable. A convenient choice is the pair of coupling operators
\begin{align}\label{eqn:hn-coupling}
    O_0 = \sum^{n-1}_{k=0} e^{2\pi i k/n} \ketbra{k}{k},\quad O_1 = \sum^{n-1}_{k=0} e^{-2\pi i k/n} \ketbra{k}{k}.
\end{align}
These operators act as one-step moves in opposite directions in the Fourier basis, so at $z=0$ the dissipative dynamics reduces to an effectively one-dimensional population process with an absorbing ground-state sector. As shown in~\cref{append:hatano-nelson-mixing}, the mixing time satisfies $t_{\rm mix}(\epsilon)\le n/4+\Or(\sqrt{n\log(1/\epsilon)})$, which in particular scales linearly in the system size $n$.

For general $z$, the shifted Hamiltonian $H_z = (H_{\rm HN}-zI)^\dagger(H_{\rm HN}-zI)$ is no longer analytically diagonalizable, so we turn to numerical evidence. In addition to the couplings in~\cref{eqn:hn-coupling}, we include a reflection-type coupling that improves mixing across the spectrum of $H_z$. The heat map in~\cref{fig:HN}(b) records the simulation time needed, starting from the highest-energy state, to reach the threshold $|\Tr(H_z\rho(t))-E_0(H_z)|\le 10^{-3}$ for $n=20$, $J=1$, $\gamma=0.8$, using the discrete-time dynamics with step size $\tau=0.1$. Over the plotted region, which contains the full PBC spectrum, the required simulation time stays below $30$. Thus the Hatano--Nelson model provides both a rigorous convergence benchmark at $z=0$ and numerical evidence that the same preparation strategy remains effective more broadly in the complex plane; see~\cref{append:hatano-nelson-mixing} for details.

\begin{figure}
    \includegraphics[width=0.5\textwidth]{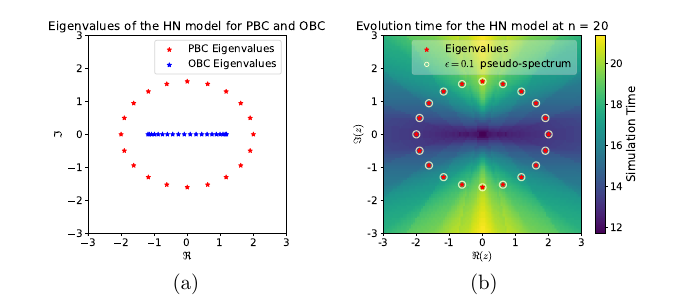}
    \caption{Preparing ground singular vectors of the Hatano-Nelson model using dissipative dynamics.
    (a) The eigenvalues of the Hatano-Nelson model under OBC and PBC ($n=20$).
    (b) The $0.1$-pseudospectrum of the Hatano-Nelson model under PBC, and the mixing time of the Lindbladian dynamics for preparing the ground singular vector of $H_{\rm HN}-zI$.
    }
    \label{fig:HN}
    \vspace{-4mm}
\end{figure}

\begin{figure*}
    \includegraphics[width=0.99\textwidth]{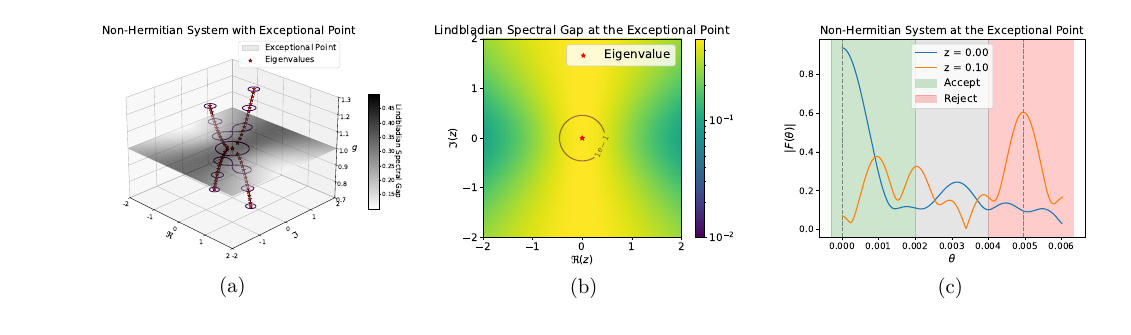}
    \caption{Probing exceptional points of a non-Hermitian qubit system using IonQ quantum computer. 
    (a) Illustration of eigenvalues and pseudospectra of the model Hamiltonian $H(g)$. The closed curves at each level shows the $0.1$-pseudospectrum at the corresponding $g$ value.
    (b) The $0.1$-pseudospectrum of $H(1)$, and the spectral gap of $\mathcal{L}_z$ over the complex plane.
    (c) The filter functions in QSIGS constructed using the measurement data from IonQ. The peak of the filter function indicates $\sigma_0(H(1)-zI)$. We fix $\epsilon=0.002$, and the results show that $z=0$ lies in the $\epsilon$-pseudospectrum while $z=0.1$ does not.}
    \label{fig:NH}
\end{figure*}

\prlsection{Probing exceptional points on an IonQ device}\label{sec:exceptional-demo}

We now implement the complete pipeline, including dissipative preparation and QSIGS, on trapped-ion hardware. Our goal is to probe exceptional-point pseudospectra, where non-Hermitian spectral instability is most pronounced. At an order-$n$ exceptional point ($n \ge 2$), $n$ eigenvalues and eigenvectors coalesce, so the matrix becomes defective. In this regime, the pseudospectrum records the instability directly: a perturbation of size $\epsilon$ can produce an eigenvalue displacement of order $\epsilon^{1/n}$.

As a minimal test case, we use the qubit Hamiltonian $H(g)=X-igZ$~\cite{el2018non}. Its eigenvalues are real for $g<1$ and imaginary for $g>1$, with the transition at $g=1$ marking a second-order exceptional point; see \cref{fig:NH}(a). At this point, the $\epsilon$-pseudospectrum is a disc centered at the origin with radius $r=\sqrt{2\epsilon+\epsilon^2}$; see~\cref{append:numerical}. This gives a clean benchmark for pseudospectrum membership: points near the origin should be accepted, while points outside this radius should be rejected.

We perform this test on IonQ Forte. For each target point $z$, we first use the dissipative protocol to prepare an approximate ground right singular vector of $A_z=H-zI$, and then apply QSIGS to estimate $\sigma_0(A_z)$. In the hardware implementation, the Lindbladian uses a single Pauli-$X$ coupling operator. Because IonQ does not support mid-circuit reset, we defer the reset operations by introducing additional ancillas and measuring them only at the end of the circuit. We then implement the QSIGS subroutine through synthesized block-encodings of the required singular-value transformation.

The resulting filter functions are shown in~\cref{fig:NH}(c). For each value of $z$, we use $50000$ measurement shots and extract $\sigma_0(A_z)$ from the location of the filter peak. At $z=0$, the peak lies well below the threshold $\epsilon=0.002$, so the point is correctly identified as lying in the $\epsilon$-pseudospectrum. At $z=0.1$, the peak lies beyond $2\epsilon$, so the point is correctly rejected. These outcomes agree with the exact pseudospectral radius $r\approx 0.063$ at $g=1$. 
The circuit construction, experimental parameters and additional implementation details are described in~\cref{append:numerical}.

\prlsection{Discussion}

This work identifies pseudospectrum membership as a natural computational target for non-Hermitian many-body physics and gives a concrete end-to-end quantum framework for it. QSIGS achieves Heisenberg-limited query scaling for singular-value estimation. The remaining challenge is the dissipative preparation step, whose complexity depends on both the mixing time and the cost of implementing the jump operators. In our current construction, the dissipative dynamics is generated by $H_z=A_z^\dagger A_z$, so realizing the jump operators requires access to $H_z$ and may depend sensitively on the local singular-value structure near the spectrum. Our analysis establishes rigorous convergence for the Hatano--Nelson model, but extending such guarantees to broader classes of systems remains open. Free-fermion systems provide a natural next test bed; however, because $H_z$ generally turns quadratic fermionic problems into quartic ones, new techniques will be needed to analyze dissipative mixing in non-Hermitian many-body settings.

The trapped-ion experiment should be viewed as a proof of principle. The next steps are to reduce the overhead of block-encoding and dissipative primitives, and extend the method to larger local models and broader regions of the complex plane. Progress on these fronts would help make pseudospectrum estimation a practical primitive for quantum simulation of non-Hermitian systems.

\prlsection{Acknowledgment}
This work is partially supported by the Simons Quantum Postdoctoral Fellowship (J.L.), by the U.S. Department of Energy, Office of Science, Accelerated Research in Quantum Computing Centers, Quantum Utility through Advanced Computational Quantum Algorithms, grant no. DE-SC0025572 (L.L.) and grant no. DE-SC0025341 (G.Y., X.W.), the U.S. National Science Foundation grant CCF-1942837 (CAREER) (G.Y., X.W.), a Sloan research fellowship (X.W.), and a Simons Investigator Award in Mathematics through Grant No. 825053 (J.L., L.L.). 
We are also grateful to the access to IonQ
machines provided by the National Quantum Laboratory
(QLab) at UMD.

\nocite{*}

\bibliographystyle{apsrev4-2}
\bibliography{ref}

\begin{thebibliography}{57}%
\makeatletter
\providecommand \@ifxundefined [1]{%
 \@ifx{#1\undefined}
}%
\providecommand \@ifnum [1]{%
 \ifnum #1\expandafter \@firstoftwo
 \else \expandafter \@secondoftwo
 \fi
}%
\providecommand \@ifx [1]{%
 \ifx #1\expandafter \@firstoftwo
 \else \expandafter \@secondoftwo
 \fi
}%
\providecommand \natexlab [1]{#1}%
\providecommand \enquote  [1]{``#1''}%
\providecommand \bibnamefont  [1]{#1}%
\providecommand \bibfnamefont [1]{#1}%
\providecommand \citenamefont [1]{#1}%
\providecommand \href@noop [0]{\@secondoftwo}%
\providecommand \href [0]{\begingroup \@sanitize@url \@href}%
\providecommand \@href[1]{\@@startlink{#1}\@@href}%
\providecommand \@@href[1]{\endgroup#1\@@endlink}%
\providecommand \@sanitize@url [0]{\catcode `\\12\catcode `\$12\catcode `\&12\catcode `\#12\catcode `\^12\catcode `\_12\catcode `\%12\relax}%
\providecommand \@@startlink[1]{}%
\providecommand \@@endlink[0]{}%
\providecommand \url  [0]{\begingroup\@sanitize@url \@url }%
\providecommand \@url [1]{\endgroup\@href {#1}{\urlprefix }}%
\providecommand \urlprefix  [0]{URL }%
\providecommand \Eprint [0]{\href }%
\providecommand \doibase [0]{https://doi.org/}%
\providecommand \selectlanguage [0]{\@gobble}%
\providecommand \bibinfo  [0]{\@secondoftwo}%
\providecommand \bibfield  [0]{\@secondoftwo}%
\providecommand \translation [1]{[#1]}%
\providecommand \BibitemOpen [0]{}%
\providecommand \bibitemStop [0]{}%
\providecommand \bibitemNoStop [0]{.\EOS\space}%
\providecommand \EOS [0]{\spacefactor3000\relax}%
\providecommand \BibitemShut  [1]{\csname bibitem#1\endcsname}%
\let\auto@bib@innerbib\@empty
\bibitem [{\citenamefont {Kitaev}\ \emph {et~al.}(2002)\citenamefont {Kitaev}, \citenamefont {Shen},\ and\ \citenamefont {Vyalyi}}]{kitaev2002classical}%
  \BibitemOpen
  \bibfield  {author} {\bibinfo {author} {\bibfnamefont {A.~Y.}\ \bibnamefont {Kitaev}}, \bibinfo {author} {\bibfnamefont {A.}~\bibnamefont {Shen}},\ and\ \bibinfo {author} {\bibfnamefont {M.~N.}\ \bibnamefont {Vyalyi}},\ }\href@noop {} {\emph {\bibinfo {title} {Classical and Quantum Computation}}},\ \bibinfo {number} {47}\ (\bibinfo  {publisher} {American Mathematical Soc.},\ \bibinfo {year} {2002})\BibitemShut {NoStop}%
\bibitem [{\citenamefont {Somma}(2019)}]{Somma2019}%
  \BibitemOpen
  \bibfield  {author} {\bibinfo {author} {\bibfnamefont {R.~D.}\ \bibnamefont {Somma}},\ }\href {https://doi.org/10.1088/1367-2630/ab5c60} {\bibfield  {journal} {\bibinfo  {journal} {New J. Phys.}\ }\textbf {\bibinfo {volume} {21}},\ \bibinfo {pages} {123025} (\bibinfo {year} {2019})}\BibitemShut {NoStop}%
\bibitem [{\citenamefont {O'Brien}\ \emph {et~al.}(2019)\citenamefont {O'Brien}, \citenamefont {Tarasinski},\ and\ \citenamefont {Terhal}}]{OBrien2019}%
  \BibitemOpen
  \bibfield  {author} {\bibinfo {author} {\bibfnamefont {T.~E.}\ \bibnamefont {O'Brien}}, \bibinfo {author} {\bibfnamefont {B.}~\bibnamefont {Tarasinski}},\ and\ \bibinfo {author} {\bibfnamefont {B.~M.}\ \bibnamefont {Terhal}},\ }\href {https://doi.org/10.1088/1367-2630/aafb8e} {\bibfield  {journal} {\bibinfo  {journal} {New J. Phys.}\ }\textbf {\bibinfo {volume} {21}},\ \bibinfo {pages} {023022} (\bibinfo {year} {2019})}\BibitemShut {NoStop}%
\bibitem [{\citenamefont {Lin}\ and\ \citenamefont {Tong}(2022)}]{lin2022heisenberg}%
  \BibitemOpen
  \bibfield  {author} {\bibinfo {author} {\bibfnamefont {L.}~\bibnamefont {Lin}}\ and\ \bibinfo {author} {\bibfnamefont {Y.}~\bibnamefont {Tong}},\ }\href {https://doi.org/https://doi.org/10.1103/PRXQuantum.3.010318} {\bibfield  {journal} {\bibinfo  {journal} {PRX Quantum}\ }\textbf {\bibinfo {volume} {3}},\ \bibinfo {pages} {010318} (\bibinfo {year} {2022})}\BibitemShut {NoStop}%
\bibitem [{\citenamefont {Wan}\ \emph {et~al.}(2022)\citenamefont {Wan}, \citenamefont {Berta},\ and\ \citenamefont {Campbell}}]{wan2022randomized}%
  \BibitemOpen
  \bibfield  {author} {\bibinfo {author} {\bibfnamefont {K.}~\bibnamefont {Wan}}, \bibinfo {author} {\bibfnamefont {M.}~\bibnamefont {Berta}},\ and\ \bibinfo {author} {\bibfnamefont {E.~T.}\ \bibnamefont {Campbell}},\ }\href {https://doi.org/https://doi.org/10.1103/PhysRevLett.129.030503} {\bibfield  {journal} {\bibinfo  {journal} {Phys. Rev. Lett.}\ }\textbf {\bibinfo {volume} {129}},\ \bibinfo {pages} {030503} (\bibinfo {year} {2022})}\BibitemShut {NoStop}%
\bibitem [{\citenamefont {Ni}\ \emph {et~al.}(2023)\citenamefont {Ni}, \citenamefont {Li},\ and\ \citenamefont {Ying}}]{ni2023low}%
  \BibitemOpen
  \bibfield  {author} {\bibinfo {author} {\bibfnamefont {H.}~\bibnamefont {Ni}}, \bibinfo {author} {\bibfnamefont {H.}~\bibnamefont {Li}},\ and\ \bibinfo {author} {\bibfnamefont {L.}~\bibnamefont {Ying}},\ }\href {https://doi.org/https://doi.org/10.22331/q-2023-11-06-1165} {\bibfield  {journal} {\bibinfo  {journal} {Quantum}\ }\textbf {\bibinfo {volume} {7}},\ \bibinfo {pages} {1165} (\bibinfo {year} {2023})}\BibitemShut {NoStop}%
\bibitem [{\citenamefont {Ding}\ and\ \citenamefont {Lin}(2023)}]{ding2023even}%
  \BibitemOpen
  \bibfield  {author} {\bibinfo {author} {\bibfnamefont {Z.}~\bibnamefont {Ding}}\ and\ \bibinfo {author} {\bibfnamefont {L.}~\bibnamefont {Lin}},\ }\href {https://doi.org/https://doi.org/10.1103/PRXQuantum.4.020331} {\bibfield  {journal} {\bibinfo  {journal} {PRX Quantum}\ }\textbf {\bibinfo {volume} {4}},\ \bibinfo {pages} {020331} (\bibinfo {year} {2023})}\BibitemShut {NoStop}%
\bibitem [{\citenamefont {Ding}\ \emph {et~al.}(2024{\natexlab{a}})\citenamefont {Ding}, \citenamefont {Li}, \citenamefont {Lin}, \citenamefont {Ni}, \citenamefont {Ying},\ and\ \citenamefont {Zhang}}]{ding2024quantum}%
  \BibitemOpen
  \bibfield  {author} {\bibinfo {author} {\bibfnamefont {Z.}~\bibnamefont {Ding}}, \bibinfo {author} {\bibfnamefont {H.}~\bibnamefont {Li}}, \bibinfo {author} {\bibfnamefont {L.}~\bibnamefont {Lin}}, \bibinfo {author} {\bibfnamefont {H.}~\bibnamefont {Ni}}, \bibinfo {author} {\bibfnamefont {L.}~\bibnamefont {Ying}},\ and\ \bibinfo {author} {\bibfnamefont {R.}~\bibnamefont {Zhang}},\ }\href {https://doi.org/https://doi.org/10.22331/q-2024-10-02-1487} {\bibfield  {journal} {\bibinfo  {journal} {Quantum}\ }\textbf {\bibinfo {volume} {8}},\ \bibinfo {pages} {1487} (\bibinfo {year} {2024}{\natexlab{a}})}\BibitemShut {NoStop}%
\bibitem [{\citenamefont {Alase}\ and\ \citenamefont {Karuvade}(2024)}]{alase2024resolvent}%
  \BibitemOpen
  \bibfield  {author} {\bibinfo {author} {\bibfnamefont {A.}~\bibnamefont {Alase}}\ and\ \bibinfo {author} {\bibfnamefont {S.}~\bibnamefont {Karuvade}},\ }\href@noop {} {\bibfield  {journal} {\bibinfo  {journal} {arXiv preprint arXiv:2410.04837}\ } (\bibinfo {year} {2024})}\BibitemShut {NoStop}%
\bibitem [{\citenamefont {Castaldo}\ and\ \citenamefont {Corni}(2025)}]{castaldo2025heisenberg}%
  \BibitemOpen
  \bibfield  {author} {\bibinfo {author} {\bibfnamefont {D.}~\bibnamefont {Castaldo}}\ and\ \bibinfo {author} {\bibfnamefont {S.}~\bibnamefont {Corni}},\ }\href@noop {} {\bibfield  {journal} {\bibinfo  {journal} {arXiv preprint arXiv:2507.12438}\ } (\bibinfo {year} {2025})}\BibitemShut {NoStop}%
\bibitem [{\citenamefont {Shao}(2022)}]{shao2022computing}%
  \BibitemOpen
  \bibfield  {author} {\bibinfo {author} {\bibfnamefont {C.}~\bibnamefont {Shao}},\ }\bibfield  {journal} {\bibinfo  {journal} {ACM Trans. on Quantum Computing}\ }\textbf {\bibinfo {volume} {3}},\ \href {https://doi.org/10.1145/3527845} {10.1145/3527845} (\bibinfo {year} {2022})\BibitemShut {NoStop}%
\bibitem [{\citenamefont {Zhang}\ \emph {et~al.}(2024)\citenamefont {Zhang}, \citenamefont {Zhang}, \citenamefont {He},\ and\ \citenamefont {Yuan}}]{zhang2024exponential}%
  \BibitemOpen
  \bibfield  {author} {\bibinfo {author} {\bibfnamefont {X.-M.}\ \bibnamefont {Zhang}}, \bibinfo {author} {\bibfnamefont {Y.}~\bibnamefont {Zhang}}, \bibinfo {author} {\bibfnamefont {W.}~\bibnamefont {He}},\ and\ \bibinfo {author} {\bibfnamefont {X.}~\bibnamefont {Yuan}},\ }\href@noop {} {\bibfield  {journal} {\bibinfo  {journal} {arXiv preprint arXiv:2401.12091}\ } (\bibinfo {year} {2024})}\BibitemShut {NoStop}%
\bibitem [{\citenamefont {Lin}\ and\ \citenamefont {Shang}(2025)}]{lin2025quantum}%
  \BibitemOpen
  \bibfield  {author} {\bibinfo {author} {\bibfnamefont {H.}~\bibnamefont {Lin}}\ and\ \bibinfo {author} {\bibfnamefont {Y.}~\bibnamefont {Shang}},\ }\href@noop {} {\bibfield  {journal} {\bibinfo  {journal} {arXiv preprint arXiv:2502.18119}\ } (\bibinfo {year} {2025})}\BibitemShut {NoStop}%
\bibitem [{\citenamefont {Zhang}\ \emph {et~al.}(2025{\natexlab{a}})\citenamefont {Zhang}, \citenamefont {Wu},\ and\ \citenamefont {Yuan}}]{zhang2025heisenberg}%
  \BibitemOpen
  \bibfield  {author} {\bibinfo {author} {\bibfnamefont {Y.}~\bibnamefont {Zhang}}, \bibinfo {author} {\bibfnamefont {Y.}~\bibnamefont {Wu}},\ and\ \bibinfo {author} {\bibfnamefont {X.}~\bibnamefont {Yuan}},\ }\href@noop {} {\bibfield  {journal} {\bibinfo  {journal} {arXiv preprint arXiv:2510.19651}\ } (\bibinfo {year} {2025}{\natexlab{a}})}\BibitemShut {NoStop}%
\bibitem [{\citenamefont {Low}\ and\ \citenamefont {Su}(2026)}]{low2026quantum}%
  \BibitemOpen
  \bibfield  {author} {\bibinfo {author} {\bibfnamefont {G.~H.}\ \bibnamefont {Low}}\ and\ \bibinfo {author} {\bibfnamefont {Y.}~\bibnamefont {Su}},\ }\href {https://doi.org/https://doi.org/10.1137/24M1689363} {\bibfield  {journal} {\bibinfo  {journal} {SIAM Journal on Computing}\ }\textbf {\bibinfo {volume} {55}},\ \bibinfo {pages} {135} (\bibinfo {year} {2026})}\BibitemShut {NoStop}%
\bibitem [{\citenamefont {Yao}\ and\ \citenamefont {Wang}(2018)}]{yao2018edge}%
  \BibitemOpen
  \bibfield  {author} {\bibinfo {author} {\bibfnamefont {S.}~\bibnamefont {Yao}}\ and\ \bibinfo {author} {\bibfnamefont {Z.}~\bibnamefont {Wang}},\ }\href {https://doi.org/https://doi.org/10.1103/PhysRevLett.121.086803} {\bibfield  {journal} {\bibinfo  {journal} {Phys. Rev. Lett.}\ }\textbf {\bibinfo {volume} {121}},\ \bibinfo {pages} {086803} (\bibinfo {year} {2018})}\BibitemShut {NoStop}%
\bibitem [{\citenamefont {Ashida}\ \emph {et~al.}(2020)\citenamefont {Ashida}, \citenamefont {Gong},\ and\ \citenamefont {Ueda}}]{ashida2020non}%
  \BibitemOpen
  \bibfield  {author} {\bibinfo {author} {\bibfnamefont {Y.}~\bibnamefont {Ashida}}, \bibinfo {author} {\bibfnamefont {Z.}~\bibnamefont {Gong}},\ and\ \bibinfo {author} {\bibfnamefont {M.}~\bibnamefont {Ueda}},\ }\href {https://doi.org/https://doi.org/10.1080/00018732.2021.1876991} {\bibfield  {journal} {\bibinfo  {journal} {Adv. Phys}\ }\textbf {\bibinfo {volume} {69}},\ \bibinfo {pages} {249} (\bibinfo {year} {2020})}\BibitemShut {NoStop}%
\bibitem [{\citenamefont {D{\'o}ra}\ and\ \citenamefont {Moca}(2020)}]{dora2020quantum}%
  \BibitemOpen
  \bibfield  {author} {\bibinfo {author} {\bibfnamefont {B.}~\bibnamefont {D{\'o}ra}}\ and\ \bibinfo {author} {\bibfnamefont {C.~P.}\ \bibnamefont {Moca}},\ }\href {https://doi.org/https://doi.org/10.1103/PhysRevLett.124.136802} {\bibfield  {journal} {\bibinfo  {journal} {Phys. Rev. Lett.}\ }\textbf {\bibinfo {volume} {124}},\ \bibinfo {pages} {136802} (\bibinfo {year} {2020})}\BibitemShut {NoStop}%
\bibitem [{\citenamefont {Gopalakrishnan}\ and\ \citenamefont {Gullans}(2021)}]{gopalakrishnan2021entanglement}%
  \BibitemOpen
  \bibfield  {author} {\bibinfo {author} {\bibfnamefont {S.}~\bibnamefont {Gopalakrishnan}}\ and\ \bibinfo {author} {\bibfnamefont {M.~J.}\ \bibnamefont {Gullans}},\ }\href {https://doi.org/https://doi.org/10.1103/PhysRevLett.126.170503} {\bibfield  {journal} {\bibinfo  {journal} {Phys. Rev. Lett.}\ }\textbf {\bibinfo {volume} {126}},\ \bibinfo {pages} {170503} (\bibinfo {year} {2021})}\BibitemShut {NoStop}%
\bibitem [{\citenamefont {Zhang}\ \emph {et~al.}(2025{\natexlab{b}})\citenamefont {Zhang}, \citenamefont {Carrasquilla},\ and\ \citenamefont {Kim}}]{zhang2025observation}%
  \BibitemOpen
  \bibfield  {author} {\bibinfo {author} {\bibfnamefont {Y.}~\bibnamefont {Zhang}}, \bibinfo {author} {\bibfnamefont {J.}~\bibnamefont {Carrasquilla}},\ and\ \bibinfo {author} {\bibfnamefont {Y.~B.}\ \bibnamefont {Kim}},\ }\href {https://doi.org/https://doi.org/10.1038/s41467-025-57930-3} {\bibfield  {journal} {\bibinfo  {journal} {Nat. Commun.}\ }\textbf {\bibinfo {volume} {16}},\ \bibinfo {pages} {3286} (\bibinfo {year} {2025}{\natexlab{b}})}\BibitemShut {NoStop}%
\bibitem [{\citenamefont {Bender}\ and\ \citenamefont {Boettcher}(1998)}]{bender1998real}%
  \BibitemOpen
  \bibfield  {author} {\bibinfo {author} {\bibfnamefont {C.~M.}\ \bibnamefont {Bender}}\ and\ \bibinfo {author} {\bibfnamefont {S.}~\bibnamefont {Boettcher}},\ }\href {https://doi.org/https://doi.org/10.1103/PhysRevLett.80.5243} {\bibfield  {journal} {\bibinfo  {journal} {Phys. Rev. Lett.}\ }\textbf {\bibinfo {volume} {80}},\ \bibinfo {pages} {5243} (\bibinfo {year} {1998})}\BibitemShut {NoStop}%
\bibitem [{\citenamefont {Bender}\ \emph {et~al.}(1999)\citenamefont {Bender}, \citenamefont {Boettcher},\ and\ \citenamefont {Meisinger}}]{bender1999PT}%
  \BibitemOpen
  \bibfield  {author} {\bibinfo {author} {\bibfnamefont {C.~M.}\ \bibnamefont {Bender}}, \bibinfo {author} {\bibfnamefont {S.}~\bibnamefont {Boettcher}},\ and\ \bibinfo {author} {\bibfnamefont {P.~N.}\ \bibnamefont {Meisinger}},\ }\href {https://doi.org/https://doi.org/10.1063/1.532860} {\bibfield  {journal} {\bibinfo  {journal} {J. Math. Phys.}\ }\textbf {\bibinfo {volume} {40}},\ \bibinfo {pages} {2201} (\bibinfo {year} {1999})}\BibitemShut {NoStop}%
\bibitem [{\citenamefont {El-Ganainy}\ \emph {et~al.}(2018)\citenamefont {El-Ganainy}, \citenamefont {Makris}, \citenamefont {Khajavikhan}, \citenamefont {Musslimani}, \citenamefont {Rotter},\ and\ \citenamefont {Christodoulides}}]{el2018non}%
  \BibitemOpen
  \bibfield  {author} {\bibinfo {author} {\bibfnamefont {R.}~\bibnamefont {El-Ganainy}}, \bibinfo {author} {\bibfnamefont {K.~G.}\ \bibnamefont {Makris}}, \bibinfo {author} {\bibfnamefont {M.}~\bibnamefont {Khajavikhan}}, \bibinfo {author} {\bibfnamefont {Z.~H.}\ \bibnamefont {Musslimani}}, \bibinfo {author} {\bibfnamefont {S.}~\bibnamefont {Rotter}},\ and\ \bibinfo {author} {\bibfnamefont {D.~N.}\ \bibnamefont {Christodoulides}},\ }\href {https://doi.org/https://doi.org/10.1038/nphys4323} {\bibfield  {journal} {\bibinfo  {journal} {Nat. Phys.}\ }\textbf {\bibinfo {volume} {14}},\ \bibinfo {pages} {11} (\bibinfo {year} {2018})}\BibitemShut {NoStop}%
\bibitem [{\citenamefont {Zhang}\ \emph {et~al.}(2022)\citenamefont {Zhang}, \citenamefont {Zhang}, \citenamefont {Lu},\ and\ \citenamefont {Chen}}]{zhang2022review}%
  \BibitemOpen
  \bibfield  {author} {\bibinfo {author} {\bibfnamefont {X.}~\bibnamefont {Zhang}}, \bibinfo {author} {\bibfnamefont {T.}~\bibnamefont {Zhang}}, \bibinfo {author} {\bibfnamefont {M.-H.}\ \bibnamefont {Lu}},\ and\ \bibinfo {author} {\bibfnamefont {Y.-F.}\ \bibnamefont {Chen}},\ }\href {https://doi.org/https://doi.org/10.1080/23746149.2022.2109431} {\bibfield  {journal} {\bibinfo  {journal} {Adv. Phys. X}\ }\textbf {\bibinfo {volume} {7}},\ \bibinfo {pages} {2109431} (\bibinfo {year} {2022})}\BibitemShut {NoStop}%
\bibitem [{\citenamefont {Heiss}(2012)}]{heiss2012physics}%
  \BibitemOpen
  \bibfield  {author} {\bibinfo {author} {\bibfnamefont {W.~D.}\ \bibnamefont {Heiss}},\ }\href {https://doi.org/https://doi.org/10.1088/1751-8113/45/44/444016} {\bibfield  {journal} {\bibinfo  {journal} {J. Phys. A: Math. Theor.}\ }\textbf {\bibinfo {volume} {45}},\ \bibinfo {pages} {444016} (\bibinfo {year} {2012})}\BibitemShut {NoStop}%
\bibitem [{\citenamefont {Kawabata}\ \emph {et~al.}(2019)\citenamefont {Kawabata}, \citenamefont {Shiozaki}, \citenamefont {Ueda},\ and\ \citenamefont {Sato}}]{kawabata2019symmetry}%
  \BibitemOpen
  \bibfield  {author} {\bibinfo {author} {\bibfnamefont {K.}~\bibnamefont {Kawabata}}, \bibinfo {author} {\bibfnamefont {K.}~\bibnamefont {Shiozaki}}, \bibinfo {author} {\bibfnamefont {M.}~\bibnamefont {Ueda}},\ and\ \bibinfo {author} {\bibfnamefont {M.}~\bibnamefont {Sato}},\ }\href {https://doi.org/https://doi.org/10.1103/PhysRevX.9.041015} {\bibfield  {journal} {\bibinfo  {journal} {Phys. Rev. X}\ }\textbf {\bibinfo {volume} {9}},\ \bibinfo {pages} {041015} (\bibinfo {year} {2019})}\BibitemShut {NoStop}%
\bibitem [{\citenamefont {Borgnia}\ \emph {et~al.}(2020)\citenamefont {Borgnia}, \citenamefont {Kruchkov},\ and\ \citenamefont {Slager}}]{borgnia2020non}%
  \BibitemOpen
  \bibfield  {author} {\bibinfo {author} {\bibfnamefont {D.~S.}\ \bibnamefont {Borgnia}}, \bibinfo {author} {\bibfnamefont {A.~J.}\ \bibnamefont {Kruchkov}},\ and\ \bibinfo {author} {\bibfnamefont {R.-J.}\ \bibnamefont {Slager}},\ }\href {https://doi.org/https://doi.org/10.1103/PhysRevLett.124.056802} {\bibfield  {journal} {\bibinfo  {journal} {Phys. Rev. Lett.}\ }\textbf {\bibinfo {volume} {124}},\ \bibinfo {pages} {056802} (\bibinfo {year} {2020})}\BibitemShut {NoStop}%
\bibitem [{\citenamefont {Trefethen}\ and\ \citenamefont {Embree}(2005)}]{trefethen2005spectra}%
  \BibitemOpen
  \bibfield  {author} {\bibinfo {author} {\bibfnamefont {L.~N.}\ \bibnamefont {Trefethen}}\ and\ \bibinfo {author} {\bibfnamefont {M.}~\bibnamefont {Embree}},\ }\href@noop {} {\emph {\bibinfo {title} {Spectra and Pseudospectra: The Behavior of Nonnormal Matrices and Operators}}}\ (\bibinfo  {publisher} {Princeton University Press},\ \bibinfo {year} {2005})\BibitemShut {NoStop}%
\bibitem [{\citenamefont {Okuma}\ and\ \citenamefont {Sato}(2021)}]{okuma2021non}%
  \BibitemOpen
  \bibfield  {author} {\bibinfo {author} {\bibfnamefont {N.}~\bibnamefont {Okuma}}\ and\ \bibinfo {author} {\bibfnamefont {M.}~\bibnamefont {Sato}},\ }\href {https://doi.org/https://doi.org/10.1103/PhysRevLett.126.176601} {\bibfield  {journal} {\bibinfo  {journal} {Phys. Rev. Lett.}\ }\textbf {\bibinfo {volume} {126}},\ \bibinfo {pages} {176601} (\bibinfo {year} {2021})}\BibitemShut {NoStop}%
\bibitem [{\citenamefont {Gily{\'e}n}\ \emph {et~al.}(2019)\citenamefont {Gily{\'e}n}, \citenamefont {Su}, \citenamefont {Low},\ and\ \citenamefont {Wiebe}}]{gilyen2019quantum}%
  \BibitemOpen
  \bibfield  {author} {\bibinfo {author} {\bibfnamefont {A.}~\bibnamefont {Gily{\'e}n}}, \bibinfo {author} {\bibfnamefont {Y.}~\bibnamefont {Su}}, \bibinfo {author} {\bibfnamefont {G.~H.}\ \bibnamefont {Low}},\ and\ \bibinfo {author} {\bibfnamefont {N.}~\bibnamefont {Wiebe}},\ }in\ \href {https://doi.org/https://doi.org/10.1145/3313276.3316366} {\emph {\bibinfo {booktitle} {Proceedings of the 51st Annual ACM SIGACT Symposium on Theory of Computing}}}\ (\bibinfo {year} {2019})\ pp.\ \bibinfo {pages} {193--204}\BibitemShut {NoStop}%
\bibitem [{\citenamefont {Verstraete}\ \emph {et~al.}(2009)\citenamefont {Verstraete}, \citenamefont {Wolf},\ and\ \citenamefont {Ignacio~Cirac}}]{verstraete2009quantum}%
  \BibitemOpen
  \bibfield  {author} {\bibinfo {author} {\bibfnamefont {F.}~\bibnamefont {Verstraete}}, \bibinfo {author} {\bibfnamefont {M.~M.}\ \bibnamefont {Wolf}},\ and\ \bibinfo {author} {\bibfnamefont {J.}~\bibnamefont {Ignacio~Cirac}},\ }\href {https://doi.org/https://doi.org/10.1038/nphys1342} {\bibfield  {journal} {\bibinfo  {journal} {Nat. Phys.}\ }\textbf {\bibinfo {volume} {5}},\ \bibinfo {pages} {633} (\bibinfo {year} {2009})}\BibitemShut {NoStop}%
\bibitem [{\citenamefont {Roy}\ \emph {et~al.}(2020)\citenamefont {Roy}, \citenamefont {Chalker}, \citenamefont {Gornyi},\ and\ \citenamefont {Gefen}}]{RoyChalkerGornyiEtAl2020}%
  \BibitemOpen
  \bibfield  {author} {\bibinfo {author} {\bibfnamefont {S.}~\bibnamefont {Roy}}, \bibinfo {author} {\bibfnamefont {J.~T.}\ \bibnamefont {Chalker}}, \bibinfo {author} {\bibfnamefont {I.~V.}\ \bibnamefont {Gornyi}},\ and\ \bibinfo {author} {\bibfnamefont {Y.}~\bibnamefont {Gefen}},\ }\href {https://doi.org/10.1103/PhysRevResearch.2.033347} {\bibfield  {journal} {\bibinfo  {journal} {Phys. Rev. Res.}\ }\textbf {\bibinfo {volume} {2}},\ \bibinfo {pages} {033347} (\bibinfo {year} {2020})}\BibitemShut {NoStop}%
\bibitem [{\citenamefont {Chen}\ \emph {et~al.}(2025)\citenamefont {Chen}, \citenamefont {Kastoryano}, \citenamefont {Brand{\~{a}}o},\ and\ \citenamefont {Gily{\'{e}}n}}]{ChenKastoryanoBrandaoEtAl2025}%
  \BibitemOpen
  \bibfield  {author} {\bibinfo {author} {\bibfnamefont {C.~F.}\ \bibnamefont {Chen}}, \bibinfo {author} {\bibfnamefont {M.}~\bibnamefont {Kastoryano}}, \bibinfo {author} {\bibfnamefont {F.~G.}\ \bibnamefont {Brand{\~{a}}o}},\ and\ \bibinfo {author} {\bibfnamefont {A.}~\bibnamefont {Gily{\'{e}}n}},\ }\href {https://doi.org/10.1038/s41586-025-09583-x} {\bibfield  {journal} {\bibinfo  {journal} {Nature}\ }\textbf {\bibinfo {volume} {646}},\ \bibinfo {pages} {561} (\bibinfo {year} {2025})}\BibitemShut {NoStop}%
\bibitem [{\citenamefont {Ding}\ \emph {et~al.}(2025)\citenamefont {Ding}, \citenamefont {Li},\ and\ \citenamefont {Lin}}]{DingLiLin2025}%
  \BibitemOpen
  \bibfield  {author} {\bibinfo {author} {\bibfnamefont {Z.}~\bibnamefont {Ding}}, \bibinfo {author} {\bibfnamefont {B.}~\bibnamefont {Li}},\ and\ \bibinfo {author} {\bibfnamefont {L.}~\bibnamefont {Lin}},\ }\href {https://doi.org/https://doi.org/10.1007/s00220-025-05235-3} {\bibfield  {journal} {\bibinfo  {journal} {Commun. Math. Phys.}\ }\textbf {\bibinfo {volume} {406}},\ \bibinfo {pages} {1} (\bibinfo {year} {2025})}\BibitemShut {NoStop}%
\bibitem [{\citenamefont {Ding}\ \emph {et~al.}(2024{\natexlab{b}})\citenamefont {Ding}, \citenamefont {Chen},\ and\ \citenamefont {Lin}}]{ding2024single}%
  \BibitemOpen
  \bibfield  {author} {\bibinfo {author} {\bibfnamefont {Z.}~\bibnamefont {Ding}}, \bibinfo {author} {\bibfnamefont {C.-F.}\ \bibnamefont {Chen}},\ and\ \bibinfo {author} {\bibfnamefont {L.}~\bibnamefont {Lin}},\ }\href {https://doi.org/https://doi.org/10.1103/PhysRevResearch.6.033147} {\bibfield  {journal} {\bibinfo  {journal} {Phys. Rev. Res.}\ }\textbf {\bibinfo {volume} {6}},\ \bibinfo {pages} {033147} (\bibinfo {year} {2024}{\natexlab{b}})}\BibitemShut {NoStop}%
\bibitem [{\citenamefont {Li}\ \emph {et~al.}(2025)\citenamefont {Li}, \citenamefont {Zhan},\ and\ \citenamefont {Lin}}]{LiZhanLin2025}%
  \BibitemOpen
  \bibfield  {author} {\bibinfo {author} {\bibfnamefont {H.~E.}\ \bibnamefont {Li}}, \bibinfo {author} {\bibfnamefont {Y.}~\bibnamefont {Zhan}},\ and\ \bibinfo {author} {\bibfnamefont {L.}~\bibnamefont {Lin}},\ }\href {https://doi.org/10.1038/s41534-025-01124-8} {\bibfield  {journal} {\bibinfo  {journal} {npj Quantum Inf.}\ }\textbf {\bibinfo {volume} {11}},\ \bibinfo {pages} {1} (\bibinfo {year} {2025})}\BibitemShut {NoStop}%
\bibitem [{\citenamefont {Lloyd}\ \emph {et~al.}(2025)\citenamefont {Lloyd}, \citenamefont {Michailidis}, \citenamefont {Mi}, \citenamefont {Smelyanskiy},\ and\ \citenamefont {Abanin}}]{Lloyd2025Quasiparticle}%
  \BibitemOpen
  \bibfield  {author} {\bibinfo {author} {\bibfnamefont {J.}~\bibnamefont {Lloyd}}, \bibinfo {author} {\bibfnamefont {A.~A.}\ \bibnamefont {Michailidis}}, \bibinfo {author} {\bibfnamefont {X.}~\bibnamefont {Mi}}, \bibinfo {author} {\bibfnamefont {V.}~\bibnamefont {Smelyanskiy}},\ and\ \bibinfo {author} {\bibfnamefont {D.~A.}\ \bibnamefont {Abanin}},\ }\href {https://doi.org/10.1103/PRXQuantum.6.010361} {\bibfield  {journal} {\bibinfo  {journal} {PRX Quantum}\ }\textbf {\bibinfo {volume} {6}},\ \bibinfo {pages} {010361} (\bibinfo {year} {2025})}\BibitemShut {NoStop}%
\bibitem [{\citenamefont {Zhan}\ \emph {et~al.}(2026)\citenamefont {Zhan}, \citenamefont {Ding}, \citenamefont {Huhn}, \citenamefont {Gray}, \citenamefont {Preskill}, \citenamefont {Chan},\ and\ \citenamefont {Lin}}]{ZhanDingHuhnEtAl2026}%
  \BibitemOpen
  \bibfield  {author} {\bibinfo {author} {\bibfnamefont {Y.}~\bibnamefont {Zhan}}, \bibinfo {author} {\bibfnamefont {Z.}~\bibnamefont {Ding}}, \bibinfo {author} {\bibfnamefont {J.}~\bibnamefont {Huhn}}, \bibinfo {author} {\bibfnamefont {J.}~\bibnamefont {Gray}}, \bibinfo {author} {\bibfnamefont {J.}~\bibnamefont {Preskill}}, \bibinfo {author} {\bibfnamefont {G.~K.-L.}\ \bibnamefont {Chan}},\ and\ \bibinfo {author} {\bibfnamefont {L.}~\bibnamefont {Lin}},\ }\href {https://doi.org/10.1103/wzb3-dbg9} {\bibfield  {journal} {\bibinfo  {journal} {Phys. Rev. X}\ }\textbf {\bibinfo {volume} {16}},\ \bibinfo {pages} {011004} (\bibinfo {year} {2026})}\BibitemShut {NoStop}%
\bibitem [{\citenamefont {Xie}\ \emph {et~al.}(2024)\citenamefont {Xie}, \citenamefont {Xue},\ and\ \citenamefont {Zhang}}]{xie2024variational}%
  \BibitemOpen
  \bibfield  {author} {\bibinfo {author} {\bibfnamefont {X.-D.}\ \bibnamefont {Xie}}, \bibinfo {author} {\bibfnamefont {Z.-Y.}\ \bibnamefont {Xue}},\ and\ \bibinfo {author} {\bibfnamefont {D.-B.}\ \bibnamefont {Zhang}},\ }\href {https://doi.org/https://doi.org/10.1007/s11467-023-1382-3} {\bibfield  {journal} {\bibinfo  {journal} {Front. Phys.}\ }\textbf {\bibinfo {volume} {19}},\ \bibinfo {pages} {41202} (\bibinfo {year} {2024})}\BibitemShut {NoStop}%
\bibitem [{\citenamefont {Zhao}\ \emph {et~al.}(2023)\citenamefont {Zhao}, \citenamefont {Zhang},\ and\ \citenamefont {Wei}}]{zhao2023universal}%
  \BibitemOpen
  \bibfield  {author} {\bibinfo {author} {\bibfnamefont {H.}~\bibnamefont {Zhao}}, \bibinfo {author} {\bibfnamefont {P.}~\bibnamefont {Zhang}},\ and\ \bibinfo {author} {\bibfnamefont {T.-C.}\ \bibnamefont {Wei}},\ }\href {https://doi.org/https://doi.org/10.1038/s41598-023-49662-5} {\bibfield  {journal} {\bibinfo  {journal} {Scientific Reports}\ }\textbf {\bibinfo {volume} {13}},\ \bibinfo {pages} {22313} (\bibinfo {year} {2023})}\BibitemShut {NoStop}%
\bibitem [{\citenamefont {Hatano}\ and\ \citenamefont {Nelson}(1996)}]{hatano1996localization}%
  \BibitemOpen
  \bibfield  {author} {\bibinfo {author} {\bibfnamefont {N.}~\bibnamefont {Hatano}}\ and\ \bibinfo {author} {\bibfnamefont {D.~R.}\ \bibnamefont {Nelson}},\ }\href {https://doi.org/https://doi.org/10.1103/PhysRevLett.77.570} {\bibfield  {journal} {\bibinfo  {journal} {Phys. Rev. Lett.}\ }\textbf {\bibinfo {volume} {77}},\ \bibinfo {pages} {570} (\bibinfo {year} {1996})}\BibitemShut {NoStop}%
\bibitem [{\citenamefont {Hatano}\ and\ \citenamefont {Nelson}(1998)}]{hatano1998non}%
  \BibitemOpen
  \bibfield  {author} {\bibinfo {author} {\bibfnamefont {N.}~\bibnamefont {Hatano}}\ and\ \bibinfo {author} {\bibfnamefont {D.~R.}\ \bibnamefont {Nelson}},\ }\href {https://doi.org/https://doi.org/10.1103/PhysRevB.58.8384} {\bibfield  {journal} {\bibinfo  {journal} {Phys. Rev. B}\ }\textbf {\bibinfo {volume} {58}},\ \bibinfo {pages} {8384} (\bibinfo {year} {1998})}\BibitemShut {NoStop}%
\bibitem [{\citenamefont {Feng}\ \emph {et~al.}(2025)\citenamefont {Feng}, \citenamefont {Liu}, \citenamefont {Zhang},\ and\ \citenamefont {Chen}}]{feng2025numerical}%
  \BibitemOpen
  \bibfield  {author} {\bibinfo {author} {\bibfnamefont {X.}~\bibnamefont {Feng}}, \bibinfo {author} {\bibfnamefont {S.}~\bibnamefont {Liu}}, \bibinfo {author} {\bibfnamefont {S.-X.}\ \bibnamefont {Zhang}},\ and\ \bibinfo {author} {\bibfnamefont {S.}~\bibnamefont {Chen}},\ }\href {https://doi.org/https://doi.org/10.1103/g1cw-tk7f} {\bibfield  {journal} {\bibinfo  {journal} {Phys. Rev. B}\ }\textbf {\bibinfo {volume} {111}},\ \bibinfo {pages} {224310} (\bibinfo {year} {2025})}\BibitemShut {NoStop}%
\bibitem [{\citenamefont {Kliesch}\ \emph {et~al.}(2011)\citenamefont {Kliesch}, \citenamefont {Barthel}, \citenamefont {Gogolin}, \citenamefont {Kastoryano},\ and\ \citenamefont {Eisert}}]{kliesch2011dissipative}%
  \BibitemOpen
  \bibfield  {author} {\bibinfo {author} {\bibfnamefont {M.}~\bibnamefont {Kliesch}}, \bibinfo {author} {\bibfnamefont {T.}~\bibnamefont {Barthel}}, \bibinfo {author} {\bibfnamefont {C.}~\bibnamefont {Gogolin}}, \bibinfo {author} {\bibfnamefont {M.}~\bibnamefont {Kastoryano}},\ and\ \bibinfo {author} {\bibfnamefont {J.}~\bibnamefont {Eisert}},\ }\href {https://doi.org/https://doi.org/10.1103/PhysRevLett.107.120501} {\bibfield  {journal} {\bibinfo  {journal} {Phys. Rev. Lett.}\ }\textbf {\bibinfo {volume} {107}},\ \bibinfo {pages} {120501} (\bibinfo {year} {2011})}\BibitemShut {NoStop}%
\bibitem [{\citenamefont {Childs}\ and\ \citenamefont {Li}(2017)}]{childs2017efficient}%
  \BibitemOpen
  \bibfield  {author} {\bibinfo {author} {\bibfnamefont {A.~M.}\ \bibnamefont {Childs}}\ and\ \bibinfo {author} {\bibfnamefont {T.}~\bibnamefont {Li}},\ }\href {https://doi.org/https://doi.org/10.26421/QIC17.11-12} {\bibfield  {journal} {\bibinfo  {journal} {Quantum Inf. Comput.}\ }\textbf {\bibinfo {volume} {17}},\ \bibinfo {pages} {901} (\bibinfo {year} {2017})}\BibitemShut {NoStop}%
\bibitem [{\citenamefont {Li}\ and\ \citenamefont {Wang}(2023)}]{li2022simulating}%
  \BibitemOpen
  \bibfield  {author} {\bibinfo {author} {\bibfnamefont {X.}~\bibnamefont {Li}}\ and\ \bibinfo {author} {\bibfnamefont {C.}~\bibnamefont {Wang}},\ }in\ \href {https://doi.org/10.4230/LIPIcs.ICALP.2023.87} {\emph {\bibinfo {booktitle} {50th International Colloquium on Automata, Languages, and Programming (ICALP)}}}\ (\bibinfo {year} {2023})\BibitemShut {NoStop}%
\bibitem [{\citenamefont {Ding}\ \emph {et~al.}(2024{\natexlab{c}})\citenamefont {Ding}, \citenamefont {Li},\ and\ \citenamefont {Lin}}]{ding2024simulating}%
  \BibitemOpen
  \bibfield  {author} {\bibinfo {author} {\bibfnamefont {Z.}~\bibnamefont {Ding}}, \bibinfo {author} {\bibfnamefont {X.}~\bibnamefont {Li}},\ and\ \bibinfo {author} {\bibfnamefont {L.}~\bibnamefont {Lin}},\ }\href {https://doi.org/https://doi.org/10.1103/PRXQuantum.5.020332} {\bibfield  {journal} {\bibinfo  {journal} {PRX Quantum}\ }\textbf {\bibinfo {volume} {5}},\ \bibinfo {pages} {020332} (\bibinfo {year} {2024}{\natexlab{c}})}\BibitemShut {NoStop}%
\bibitem [{\citenamefont {Abramowitz}\ and\ \citenamefont {Stegun}(1965)}]{abramowitz1965handbook}%
  \BibitemOpen
  \bibfield  {author} {\bibinfo {author} {\bibfnamefont {M.}~\bibnamefont {Abramowitz}}\ and\ \bibinfo {author} {\bibfnamefont {I.~A.}\ \bibnamefont {Stegun}},\ }\href@noop {} {\emph {\bibinfo {title} {Handbook of Mathematical Functions: with Formulas, Graphs, and Mathematical Tables}}},\ Vol.~\bibinfo {volume} {55}\ (\bibinfo  {publisher} {Courier Corporation},\ \bibinfo {year} {1965})\BibitemShut {NoStop}%
\bibitem [{\citenamefont {Javadi-Abhari}\ \emph {et~al.}(2024)\citenamefont {Javadi-Abhari}, \citenamefont {Treinish}, \citenamefont {Krsulich}, \citenamefont {Wood}, \citenamefont {Lishman}, \citenamefont {Gacon}, \citenamefont {Martiel}, \citenamefont {Nation}, \citenamefont {Bishop}, \citenamefont {Cross}, \citenamefont {Johnson},\ and\ \citenamefont {Gambetta}}]{qiskit2024}%
  \BibitemOpen
  \bibfield  {author} {\bibinfo {author} {\bibfnamefont {A.}~\bibnamefont {Javadi-Abhari}}, \bibinfo {author} {\bibfnamefont {M.}~\bibnamefont {Treinish}}, \bibinfo {author} {\bibfnamefont {K.}~\bibnamefont {Krsulich}}, \bibinfo {author} {\bibfnamefont {C.~J.}\ \bibnamefont {Wood}}, \bibinfo {author} {\bibfnamefont {J.}~\bibnamefont {Lishman}}, \bibinfo {author} {\bibfnamefont {J.}~\bibnamefont {Gacon}}, \bibinfo {author} {\bibfnamefont {S.}~\bibnamefont {Martiel}}, \bibinfo {author} {\bibfnamefont {P.~D.}\ \bibnamefont {Nation}}, \bibinfo {author} {\bibfnamefont {L.~S.}\ \bibnamefont {Bishop}}, \bibinfo {author} {\bibfnamefont {A.~W.}\ \bibnamefont {Cross}}, \bibinfo {author} {\bibfnamefont {B.~R.}\ \bibnamefont {Johnson}},\ and\ \bibinfo {author} {\bibfnamefont {J.~M.}\ \bibnamefont {Gambetta}},\ }\href {https://doi.org/10.48550/arXiv.2405.08810} {\bibinfo {title} {Quantum computing with {Q}iskit}} (\bibinfo {year} {2024}),\ \Eprint {https://arxiv.org/abs/2405.08810} {arXiv:2405.08810 [quant-ph]}
  \BibitemShut {NoStop}%
\bibitem [{\citenamefont {Cleve}\ and\ \citenamefont {Wang}(2017)}]{cleve2017efficient}%
  \BibitemOpen
  \bibfield  {author} {\bibinfo {author} {\bibfnamefont {R.}~\bibnamefont {Cleve}}\ and\ \bibinfo {author} {\bibfnamefont {C.}~\bibnamefont {Wang}},\ }in\ \href {https://doi.org/https://doi.org/10.4230/LIPIcs.ICALP.2017.17} {\emph {\bibinfo {booktitle} {44th International Colloquium on Automata, Languages, and Programming (ICALP)}}}\ (\bibinfo {year} {2017})\ p.~\bibinfo {pages} {17}\BibitemShut {NoStop}%
\bibitem [{\citenamefont {Fefferman}\ and\ \citenamefont {Lin}(2018)}]{fefferman2018complete}%
  \BibitemOpen
  \bibfield  {author} {\bibinfo {author} {\bibfnamefont {B.}~\bibnamefont {Fefferman}}\ and\ \bibinfo {author} {\bibfnamefont {C.~Y.-Y.}\ \bibnamefont {Lin}},\ }in\ \href {https://doi.org/10.4230/LIPIcs.ITCS.2018.4} {\emph {\bibinfo {booktitle} {9th Innovations in Theoretical Computer Science Conference (ITCS)}}}\ (\bibinfo {year} {2018})\ pp.\ \bibinfo {pages} {4--1}\BibitemShut {NoStop}%
\bibitem [{\citenamefont {Kempe}\ and\ \citenamefont {Regev}(2003)}]{kempe20033}%
  \BibitemOpen
  \bibfield  {author} {\bibinfo {author} {\bibfnamefont {J.}~\bibnamefont {Kempe}}\ and\ \bibinfo {author} {\bibfnamefont {O.}~\bibnamefont {Regev}},\ }\href@noop {} {\bibfield  {journal} {\bibinfo  {journal} {arXiv preprint quant-ph/0302079}\ } (\bibinfo {year} {2003})}\BibitemShut {NoStop}%
\bibitem [{\citenamefont {Lin}(2022)}]{lin2022lecture}%
  \BibitemOpen
  \bibfield  {author} {\bibinfo {author} {\bibfnamefont {L.}~\bibnamefont {Lin}},\ }\href@noop {} {\bibfield  {journal} {\bibinfo  {journal} {arXiv preprint arXiv:2201.08309}\ } (\bibinfo {year} {2022})}\BibitemShut {NoStop}%
\bibitem [{\citenamefont {Gohsrich}\ \emph {et~al.}(2025)\citenamefont {Gohsrich}, \citenamefont {Banerjee},\ and\ \citenamefont {Kunst}}]{gohsrich2025non}%
  \BibitemOpen
  \bibfield  {author} {\bibinfo {author} {\bibfnamefont {J.~T.}\ \bibnamefont {Gohsrich}}, \bibinfo {author} {\bibfnamefont {A.}~\bibnamefont {Banerjee}},\ and\ \bibinfo {author} {\bibfnamefont {F.~K.}\ \bibnamefont {Kunst}},\ }\href {https://doi.org/10.1209/0295-5075/addf77} {\bibfield  {journal} {\bibinfo  {journal} {EPL}\ }\textbf {\bibinfo {volume} {150}},\ \bibinfo {pages} {60001} (\bibinfo {year} {2025})}\BibitemShut {NoStop}%
\bibitem [{\citenamefont {Wang}\ and\ \citenamefont {Clerk}(2019)}]{wang2019non}%
  \BibitemOpen
  \bibfield  {author} {\bibinfo {author} {\bibfnamefont {Y.-X.}\ \bibnamefont {Wang}}\ and\ \bibinfo {author} {\bibfnamefont {A.}~\bibnamefont {Clerk}},\ }\href {https://doi.org/https://doi.org/10.1103/PhysRevA.99.063834} {\bibfield  {journal} {\bibinfo  {journal} {Phys. Rev. A}\ }\textbf {\bibinfo {volume} {99}},\ \bibinfo {pages} {063834} (\bibinfo {year} {2019})}\BibitemShut {NoStop}%
\bibitem [{\citenamefont {Bergholtz}\ \emph {et~al.}(2021)\citenamefont {Bergholtz}, \citenamefont {Budich},\ and\ \citenamefont {Kunst}}]{bergholtz2021}%
  \BibitemOpen
  \bibfield  {author} {\bibinfo {author} {\bibfnamefont {E.~J.}\ \bibnamefont {Bergholtz}}, \bibinfo {author} {\bibfnamefont {J.~C.}\ \bibnamefont {Budich}},\ and\ \bibinfo {author} {\bibfnamefont {F.~K.}\ \bibnamefont {Kunst}},\ }\href {https://doi.org/10.1103/RevModPhys.93.015005} {\bibfield  {journal} {\bibinfo  {journal} {Rev. Mod. Phys.}\ }\textbf {\bibinfo {volume} {93}},\ \bibinfo {pages} {015005} (\bibinfo {year} {2021})}\BibitemShut {NoStop}%
\bibitem [{\citenamefont {Wu}\ \emph {et~al.}(2019)\citenamefont {Wu}, \citenamefont {Liu}, \citenamefont {Geng}, \citenamefont {Song}, \citenamefont {Ye}, \citenamefont {Duan}, \citenamefont {Rong},\ and\ \citenamefont {Du}}]{wu2019observation}%
  \BibitemOpen
  \bibfield  {author} {\bibinfo {author} {\bibfnamefont {Y.}~\bibnamefont {Wu}}, \bibinfo {author} {\bibfnamefont {W.}~\bibnamefont {Liu}}, \bibinfo {author} {\bibfnamefont {J.}~\bibnamefont {Geng}}, \bibinfo {author} {\bibfnamefont {X.}~\bibnamefont {Song}}, \bibinfo {author} {\bibfnamefont {X.}~\bibnamefont {Ye}}, \bibinfo {author} {\bibfnamefont {C.-K.}\ \bibnamefont {Duan}}, \bibinfo {author} {\bibfnamefont {X.}~\bibnamefont {Rong}},\ and\ \bibinfo {author} {\bibfnamefont {J.}~\bibnamefont {Du}},\ }\href {https://doi.org/10.1126/science.aaw8205} {\bibfield  {journal} {\bibinfo  {journal} {Science}\ }\textbf {\bibinfo {volume} {364}},\ \bibinfo {pages} {878} (\bibinfo {year} {2019})}\BibitemShut {NoStop}%
\end{thebibliography}%

\newpage
\widetext

\newpage

 \begin{center}
    \textbf{SUPPLEMENTARY MATERIALS}
 \end{center}

\appendix
\section{Computational complexity of non-Hermitian local Hamiltonians}
\label{append:hardness-non-hermitian-hamiltonian}

Recall that an $n$-qubit quantum operator $A = \sum_j c_j H_j$ is $k$-\textit{local} if each $H_j$ is a Hermitian operator acting on at most $k$ qubits. Here, the coefficients $c_j$ can be complex-valued so $A$ can be non-Hermitian.
For a matrix $A$, its $\epsilon$-pseudospectrum consists of all complex numbers $z$ at which the matrix resolvent (i.e., $(A-zI)^{-1}$) has a norm no less than $1/\epsilon$:
\[\Lambda_\epsilon(A) \coloneqq \{z \in \CC \colon \|(A-zI)^{-1}\| \ge 1/\epsilon\}.\]

\begin{problem}[$\epsilon$-Pseudospectrum]\label{prob:pseudo-spectrum}
    Given a number $z \in \CC$, a tolerance $\epsilon = \Omega(1/\poly(n))$ and a $k$-local operator $A = \sum_j c_j H_j$ such that for every $j$: $c_j \in \CC$, $|c_j| \le \poly(n)$, and $0 \preceq H_j \preceq 1$. We assume that the integer $k$, complex number $z$, tolerance parameter $\epsilon$, and coefficients $c_j$ are specified with $\poly(n)$ bits of precision. 
    The problem is to decide, under the promise that one of the following holds: \textbf{(YES)} $\sigma_0(A-zI) \le \epsilon$, \textbf{(NO)} $\sigma_0(A-zI) \ge 2\epsilon$.
\end{problem}

\begin{problem}[$\epsilon$-Neighborhood of Spectrum]\label{prob:distance-to-spec}
    Given a number $z \in \CC$, a tolerance $\epsilon = \Omega(1/\poly(n))$ and a $k$-local operator $A = \sum_j c_j H_j$ such that for every $j$: $c_j \in \CC$, $|c_j| \le \poly(n)$, and $0 \preceq H_j \preceq 1$. Let $\{\lambda_k\}_{k}$ be the eigenvalues of $A$. 
    We assume that the integer $k$, complex number $z$, tolerance parameter $\epsilon$, and coefficients $c_j$ are specified with $\poly(n)$ bits of precision.  
    The problem is to decide, under the promise that one of the following holds: \textbf{(YES)} $\min_k |z - \lambda_k| \le \epsilon$, \textbf{(NO)} $\min_k |z - \lambda_k| \ge 2\epsilon$.
\end{problem}

Our main technical results are stated as follows.

\begin{theorem}\label{thm:pseudo-spectrum-qma}
    For any $4 \le k \le \mathcal{O}(1)$, \cref{prob:pseudo-spectrum} is $\QMA$-complete.
\end{theorem}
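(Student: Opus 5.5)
Membership and hardness are handled separately.

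\emph{Membership in $\QMA$.} Let $A_z = A - zI$ and $H_z = A_z^\dagger A_z$. Then $\sigma_0(A_z)^2 = \lambda_{\min}(H_z)$. Expanding,
\[
H_z = \sum_{j,l} \overline{c_j} c_l H_j H_l - \sum_j (\bar z c_j + z \overline{c_j}) H_j + |z|^2 I ,
\]
which is a $2k$-local Hermitian operator with $\poly(n)$ many terms of $\poly(n)$-bounded norm. The YES case gives $\lambda_{\min}(H_z) \le \epsilon^2$. The NO case gives $\lambda_{\min}(H_z) \ge 4\epsilon^2$. Since $\epsilon = \Omega(1/\poly(n))$, the gap $3\epsilon^2$ is inverse polynomial relative to the $\poly(n)$ norm of $H_z$. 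This is exactly a $2k$-local Hamiltonian instance, which lies in $\QMA$ for every constant locality by Kitaev's phase-estimation verifier~\cite{kitaev2002classical}. Alternatively, the verifier can run phase estimation or QSVT directly on a block-encoding of $A_z$, which is efficiently constructible from the local terms. Either route gives membership for all $k = \Or(1)$.

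\emph{Hardness for $k \ge 4$.} I would reduce from the $2$-local Hamiltonian problem, which is $\QMA$-complete~\cite{kempe20033}. Take $H = \sum_j h_j$ with $\poly(n)$ bounded terms, thresholds $a < b$ and $b - a \ge 1/\poly(n)$, where we may shift so that $H \succeq 0$. Set $A := H$, which is Hermitian and $2$-local, and hence $k$-local for every $k \ge 4$ after padding. Then $\sigma_0(A - zI) = \min_i |\lambda_i - z|$.

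The point $z$ must be chosen so that ``ground energy $\le a$ versus $\ge b$'' becomes ``$\sigma_0 \le \epsilon$ versus $\ge 2\epsilon$''. A single real $z$ fails, because eigenvalues far above $z$ are irrelevant but eigenvalues near $z$ from above would wrongly give YES. Taking $z = a$ directly therefore does not work. Instead, build a Hamiltonian whose spectrum near the threshold is controlled from one side, for example by working with $H' = H - aI$ and a gadget that folds it. The key realization is that we want $\lambda_{\min}$ information, not distance to spectrum. I therefore take $A := \sqrt{\cdot}$ in disguise: find a local $A$ with $A^\dagger A \approx H - a'I + \text{const}$ and $z = 0$, so that $\sigma_0(A)^2 = \lambda_{\min}(A^\dagger A)$.

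Concretely, I would use the standard observation that $2$-local Hamiltonian remains $\QMA$-hard when restricted to frustration-free-type instances, or to a sum of squares $H = \sum_j B_j^\dagger B_j$ with $2$-local $B_j$. Introduce a $\poly(n)$-dimensional register (in qubits, an ancilla block addressed by a few qubits) and set
\[
A = \sum_j \ketbra{j}{0}_{\rm anc} \otimes B_j + \mu \ketbra{\bot}{0} \otimes I .
\]
Then $A^\dagger A = \ketbra{0}{0} \otimes (H + \mu^2 I)$ on the relevant sector, with the orthogonal ancilla sector contributing zero singular values. Those zero singular values must be avoided, for instance by adding an $\Or(1)$ penalty on non-$\ket{0}$ ancilla states via an invertible block.

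To keep locality at $4$, I would encode the ancilla index in unary or as a clock-like register with $2$-local addressing. This is where the count of $2$ (from $B_j$) plus $2$ (ancilla addressing) gives $4$. Choosing $\mu = 0$ and rescaling so that $a \mapsto \epsilon^2$ and $b \mapsto 4\epsilon^2$ (amplify the gap by adding a multiple of identity and scaling, using the polynomial coefficient budget) completes the reduction. The main obstacle is this step: realizing $H$ as $A^\dagger A$ with $A$ exactly $4$-local while ensuring the extra ancilla sectors have singular values $\ge 2\epsilon$. It requires a careful choice of ancilla encoding and verification that cross terms vanish; if that fails, I would fall back on a Hermitian-dilation construction with the spectral-side issue handled by a mirrored penalty.
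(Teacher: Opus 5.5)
Your membership argument is correct and matches the paper's. The hardness part has a genuine gap, and it is more basic than the ancilla-encoding issue you flag.

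The pseudospectrum problem has a \emph{multiplicative} factor-$2$ promise gap ($\epsilon$ versus $2\epsilon$). The local Hamiltonian problem only guarantees an \emph{additive} gap $b-a\ge 1/\mathrm{poly}(n)$, so $b/a$ can be arbitrarily close to $1$ (e.g.\ $a=1$, $b=1+1/n$). Suppose you succeed in building $A$ with $A^\dagger A=H$ on the relevant sector. Then $\sigma_0(A)=\sqrt{\lambda_0(H)}$, the thresholds become $\sqrt a$ and $\sqrt b$, and you need $b\ge 4a$. Your fix is an affine map $x\mapsto\alpha x+\beta$ sending $a\mapsto\epsilon^2$ and $b\mapsto 4\epsilon^2$, which forces $\beta=\epsilon^2(b-4a)/(b-a)$. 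This is negative whenever $b<4a$. In that case $\alpha H+\beta I$ has a negative eigenvalue on every YES instance with $\lambda_0(H)<(4a-b)/3$, so it cannot equal $A^\dagger A$ for any $A$. Adding a positive multiple of the identity (your $\mu^2$) only decreases $b/a$, and scaling leaves it unchanged.

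The square-root gadget is in fact counterproductive. For $H\succeq 0$, taking $A=H$ and $z=0$ gives $\sigma_0(A)=\lambda_0(H)$, which already works whenever $b>2a$. This is the paper's Case~1, with $\epsilon=a$, or $\epsilon=b/2$ if $a=0$. Your remark that ``we want $\lambda_{\min}$ information, not distance to spectrum'' is therefore a red herring: at $z=0$ the two coincide for PSD $H$, and the real obstruction is the ratio $b/a$.

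The missing idea, which the paper uses for $b\le 2a$, is to measure distance to a \emph{grid} of shifts rather than to a single point. Take $A=\sum_{j=1}^m\ketbra{j}{j}\otimes(H-z_j)$ with $z_j=j\epsilon$, $\epsilon=(b-a)/3$, and $m=\lceil 3a/(b-a)\rceil$. Then $\sigma_0(A)=\min_{i,j}|\lambda_i(H)-z_j|$ is at most $\epsilon$ if $\lambda_0(H)\in[0,a]$, and at least $b-(a+\epsilon)=2\epsilon$ if all eigenvalues are $\ge b$. The high eigenvalues that worried you cannot create a false YES, because the grid stops at $a+\epsilon$. This is exactly what converts the additive gap into the required factor of $2$.

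Because $A$ stays Hermitian and block diagonal in the clock register, its singular values are just the absolute values of its eigenvalues, so the locality reduction is clean. Encode $j$ in one-hot form: $\mathcal{A}=\sum_j\hat n_j\otimes(H-z_j)+\gamma\bigl(\sum_j\hat n_j-1\bigr)^2$ with $\gamma=ma+3\epsilon$, which pushes every eigenvalue outside the one-hot sector to absolute value at least $2\epsilon$. This also removes the second problem you left open. In your non-Hermitian $A=\sum_j\ketbra{j}{0}\otimes B_j$, any invertible block added on the other ancilla sectors creates cross terms in $A^\dagger A$ unless its range is orthogonal to that of the $B_j$ part. You did not show how to arrange that with $4$-local terms.
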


\begin{theorem}\label{thm:distance-to-spec-pspace}
    For any $5 \le k \le \mathcal{O}(1)$, \cref{prob:distance-to-spec} is $\PSPACE$-hard.
\end{theorem}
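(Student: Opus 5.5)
The plan is to reduce from $\PreciseQMA$. By Fefferman--Lin \cite{fefferman2018complete}, $\PreciseQMA=\PSPACE$, and the $3$-local Hamiltonian problem with exponentially small promise gap is complete for it. I would use their form with perfect completeness, possibly after a routine reformulation; that this form is available for a $3$-local instance is an assumption I would need to confirm. In this form we are given a $3$-local PSD Hamiltonian $H$ on $n$ qubits with $\|H\|\le\poly(n)$. In the YES case $\lambda_{\min}(H)=0$. In the NO case $\lambda_{\min}(H)\ge \delta:=2^{-p(n)}$ for some polynomial $p$. The point is that a strongly non-normal operator turns this exponentially small gap into an $\Or(1)$ displacement of eigenvalues. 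This is what the pseudospectrum hides and the spectrum exposes.

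\textbf{Construction.} Adjoin $m=\poly(n)$ ancilla qubits $q_1,\dots,q_m$ that encode a unary ``clock'' (a single excitation sitting on site $i$). Define
\[
A \;=\; I\otimes N \;+\; H\otimes S_{m\to 1} \;+\; C\, I\otimes \Bigl(\textstyle\sum_i n_i - 1\Bigr)^2 .
\]
Here $N=\sum_{i=1}^{m-1}\sigma^+_{i+1}\sigma^-_{i}$ is the one-directional hop. The term $S_{m\to1}=\sigma^+_1\sigma^-_m$ is the wrap-around hop. The operator $n_i$ is the occupation number of site $i$, and $C=\poly(n)$ is a large constant.

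Each term preserves the Hamming weight $w$ of the ancillas, so $A$ is block diagonal over the weight sectors.
\begin{itemize}
\item In the sector $w=1$, fix an eigenvector of $H$ with eigenvalue $E$. On that eigenvector $A$ acts as an $m\times m$ companion matrix: a Jordan shift closed by a corner entry $E$. Its eigenvalues $\mu$ satisfy $\mu^m=E$.
\item In every sector $w\neq1$, the spectrum equals the spectrum of the remaining weight-preserving part, shifted by $C(w-1)^2\ge C$. The remaining part has norm at most $\poly(n)$, so taking $C$ larger than twice that bound keeps all these eigenvalues at distance at least $1$ from $0$.
\end{itemize}

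\textbf{Locality and the promise.} The term $H\otimes\sigma^+_1\sigma^-_m$ is $(3+2)=5$-local. The hops are $2$-local and the penalty expands into $2$-local terms. Every non-Hermitian term ($\sigma^\pm$ products, complex coefficients) splits as $X=\tfrac{X+X^\dagger}{2}+i\,\tfrac{X-X^\dagger}{2i}$, and each Hermitian piece is shifted and rescaled into $0\preceq H_j\preceq 1$. Identity offsets are absorbed into $z$. All coefficients then have $\poly(n)$ bits and magnitude, as \cref{prob:distance-to-spec} requires. For any $k\ge5$ we pad with trivial locality.

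\textbf{Correctness.} Take $z=0$ and $\epsilon=1/8$.
\begin{itemize}
\item YES: $E=0$ is an eigenvalue of $H$, so $\mu=0$ is an eigenvalue of $A$.
\item NO: every $E\ge\delta$, so every eigenvalue in the $w=1$ sector has $|\mu|=E^{1/m}\ge 2^{-p/m}\ge 1/2\ge 2\epsilon$ once $m\ge p(n)$. The other sectors lie at distance at least $1$ by the choice of $C$.
\end{itemize}
This gives a polynomial-time many-one reduction, so \cref{prob:distance-to-spec} is $\PSPACE$-hard.

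\textbf{Main obstacle.} The delicate step is the choice of base problem. I need a $\PreciseQMA$-complete $3$-local instance whose YES case has an eigenvalue \emph{exactly} at the threshold, or at least exponentially closer to it than $\delta^{C'}$ for a large constant $C'$. Without perfect completeness, a YES eigenvalue $E$ could be negative or merely small, and then $|E|^{1/m}$ might not be separated from $\delta^{1/m}$.

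My first fallback is to use the gap form directly. YES gives $|E-a|\le\delta_1$, NO gives $E-a\ge \delta_2$, with $\delta_1\le\delta_2^{\,r}$ for a polynomially large ratio $r$. Such a ratio could be obtained by gap amplification inside $\PreciseQMA$, using Fefferman--Lin's error reduction. Then $|\mu|\le\delta_1^{1/m}$ versus $|\mu|\ge\delta_2^{1/m}$, and choosing $m\approx \log(1/\delta_2)$ separates the two cases by a constant factor. The remaining task would be to check that this amplification keeps the Hamiltonian $3$-local.
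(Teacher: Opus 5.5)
Your proposal is correct and follows the paper's proof essentially step by step: the same unary-clock cyclic-shift gadget with the $5$-local wrap-around term $H\otimes\sigma^+\sigma^-$, the same amplification $\mu^m=E$ with $m\gtrsim p(n)$, the same Hamming-weight penalty keeping the other sectors at distance at least $1$, and the same choice $z=0$, $\epsilon=1/8$. The assumption you flag is exactly the one the paper invokes, so your fallback is unnecessary: Fefferman--Lin's perfect completeness for $\PreciseQMA$ combined with the Kempe--Regev $3$-local construction gives YES ground energy $(1-c)/(T+1)=0$ and NO ground energy at least $(1-s)/T^3=2^{-p(n)}/T^3$.
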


\subsection{Proof of~\texorpdfstring{\cref{thm:pseudo-spectrum-qma}}{}}\label{append:proof-qma-hardness}

\begin{problem}[$k$-local Hamiltonian]\label{prob:k-local-ham}
    Given as input is a $k$-local Hamiltonian $H = \sum^r_{j=1} H_j$ acting on $n$ qubits satisfying $0 \preceq H_j \preceq 1$ and $r = \poly(n)$, and numbers $a, b$ satisfying $b-a = \Omega(1/\poly(n))$. It is promised that the smallest eigenvalue of $H$ is either at most $a$ or at least $b$. Output \textbf{(YES)} if the smallest eigenvalue of $H$ is at most $a$, or $\textbf{(NO)}$ if the smallest eigenvalue of $H$ is at least $b$.
\end{problem}

In what follows, we show~\cref{prob:pseudo-spectrum} is $\QMA$-complete by making connections to the $k$-local Hamiltonian (\cref{prob:k-local-ham}), a classic $\QMA$-complete problem.

\begin{proof}[Proof of~\cref{thm:pseudo-spectrum-qma}]
    To see~\cref{prob:pseudo-spectrum} belongs to $\QMA$, we use the identity:
    \[\sigma^2_0(A_z) = \lambda_0(H_z),\quad H_z = A^\dagger_z A_z.\]
    Since $A$ is a $k$-local operator, $H_z$ is $2k$-local. Deciding if $\sigma_0(A_z) \le \epsilon$ (YES) or $\sigma_0(A_z) \ge 2\epsilon$ (NO), now reduces to deciding whether $\lambda_0(H_z) \le a = \epsilon^2$ or $\lambda_0(H_z) \ge b=4\epsilon^2$.

    The hardness part relies on reducing any $k$-local Hamiltonian to an $\epsilon$-pseudospectrum problem.
    Given a $k$-local Hamiltonian $H$ and $0 \le a < b$ with $b - a = \Omega(1/\poly(n))$. Without loss of generality, we assume $a \le \|H\| = \poly(n)$. We denote $\lambda_0(H)$ as the minimal eigenvalue of this Hamiltonian $H$.
    
    \textbf{Case 1: $b > 2a$.}
    In this case, the local Hamiltonian can be reduced directly to an $\epsilon$-pseudospectrum problem. If $a>0$, we choose $A = H$, $z = 0$ and $\epsilon = a$. Then if $\lambda_0(H) \in [0, a]$, we have $\sigma_0(H-zI) = |\lambda_0| \le a = \epsilon$, while if $\lambda_0(H) \in [b, \infty)$, then $\sigma_0(H-zI) = |\lambda_0| \ge b > 2a = 2\epsilon$. If instead $a=0$, then the YES case forces $\lambda_0(H)=0$ because $H \succeq 0$. In that subcase we choose $A=H$, $z=0$, and $\epsilon=b/2$. Then the YES case gives $\sigma_0(H)=0 \le \epsilon$, while in the NO case $\sigma_0(H)=\lambda_0(H) \ge b = 2\epsilon$.
    
    \textbf{Case 2: $b \le 2a$.}
    In this case, we construct a new Hamiltonian:
    \begin{equation}\label{eqn:hardness-qma-A}
        A = \sum^m_{j=1}\ketbra{j}{j}\otimes (H-z_j),
    \end{equation}
    where $m = \lceil \frac{3a}{b-a}\rceil$, and $\{z_j:= \frac{(b-a)j}{3}\}^m_{j=1}$ forms a mesh grid on $[0, a]$ with an $\epsilon:= \frac{(b-a)}{3}$ distance between two adjacent points. If $\lambda_0(H) \in [0,a]$, the distance between $\lambda_0(H)$ and the nearest grid point is at most $\epsilon$, meaning $\sigma_0(A) \le \epsilon$; while if $\lambda_0(H) \ge b$, the smallest distance between the spectrum of $H$ and the grid points $\{z_j\}^m_{j=1}$ is at least $(b-a)-\epsilon \ge 2\epsilon$. Therefore, by deciding if $0$ is in the $\epsilon$-pseudospectrum of $A$, we solve the $k$-local Hamiltonian problem.
    
    Note that the locality of the new matrix $A$ as in~\cref{eqn:hardness-qma-A} is $k+\log_2(m)$, since $m= \poly(n)$, the locality of $A$ scales as $\mathcal{O}(\log(n))$. To further reduce the locality of $A$ to a constant, we employ the standard ``unary clock'' technique from Hamiltonian complexity theory~\cite{kitaev2002classical}.
    Consider the following Hamiltonian operator, 

    \begin{equation}\label{eqn:gadget-hamiltonian-with-penalty}
        \AA = \AA_0 + \gamma \HH_{\rm pen},\quad \AA_0 = \sum^m_{j=1}\hat{n}_{j} \otimes (H-z_j),\quad \HH_{\rm pen} = \left(\sum^{m}_{j=1}\hat{n}_j - 1\right)^2 \otimes I,
    \end{equation}
    where $\hat{n}_j = (I - \sigma^{(z)}_j)/2$ is the number operator acting on the $j$-th ancilla qubit, and $\gamma > 0$ is a penalty parameter. 
    Here, the ancilla register uses the one-hot codeword to encode the computational basis $\{\ket{j}\}^m_{j=1}$. The operator $\AA$ is $(k+2)$-local acting on $n+m = \poly(n)$ qubits. Let $\SS$ be the encoding subspace spanned by the one-hot codewords, and we have $A = \AA|_\SS$. 
    Since both $\AA_0$ and $\HH_{\rm pen}$ are polynomials in the commuting clock number operators, $\AA$ is block diagonal with respect to the Hamming-weight decomposition of the ancilla register. For a subset $S \subseteq \{1,\dots,m\}$ with $|S|=r$, the restriction of $\AA$ to the sector where exactly the qubits in $S$ are occupied is
    \[
        \AA_S = rH - \left(\sum_{j\in S} z_j\right)I + \gamma(r-1)^2 I.
    \]
    In particular, when $r=1$ we recover the desired block $H-z_j$, so $\AA|_\SS = A$. When $r=0$, the operator is simply $\gamma I$. When $r\ge 2$, every eigenvalue of $\AA_S$ is bounded below by
    \[
        \gamma(r-1)^2 - \sum_{j\in S} z_j \ge \gamma - (ma+\epsilon),
    \]
    because $H \succeq 0$ and $0 \le z_j \le a$ for every $1\le j\le m-1$ and $0\le z_m \le a+\epsilon$. We therefore choose
    \[
        \gamma := ma + 3\epsilon = \poly(n),
    \]
    and every eigenvalue in $\SS^\perp$ is at least $2\epsilon$ away from the origin. Since $\AA$ is Hermitian, its singular values are the absolute values of its eigenvalues, and hence
    \[
        \sigma_0(\AA) = \min\{\sigma_0(A),\sigma_0(\AA|_{\SS^\perp})\},\qquad \sigma_0(\AA|_{\SS^\perp}) \ge 2\epsilon.
    \]
    Therefore the two promise cases for $A$ are preserved exactly by $\AA$: (YES) $\sigma_0(\AA) \le \epsilon$, and (NO) $\sigma_0(\AA) \ge 2\epsilon$. Reducing from $2$-local Hamiltonian yields a $4$-local instance, so \cref{prob:pseudo-spectrum} is $\QMA$-complete for any locality at least $4$.
\end{proof}

\subsection{Proof of~\texorpdfstring{\cref{thm:distance-to-spec-pspace}}{}}\label{append:exact-problem-pspace}

\begin{problem}[Precise $k$-local Hamiltonian]
    Given as input is a $k$-local Hamiltonian $H = \sum^r_{j=1} H_j$ acting on $n$ qubits, satisfying $r \in \poly(n)$ and $\|H_j\| \le \poly(n)$, and numbers $a , b$ satisfying $b - a > 2^{-\poly(n)}$. It is promised that the smallest eigenvalue of $H$ is either at most $a$ or at least $b$. Output \textbf{(YES)} if the smallest eigenvalue of $H$ is at most $a$, and output $\textbf{(NO)}$ otherwise.    
\end{problem}

\begin{theorem}[Theorem 24,~\cite{fefferman2018complete}]\label{thm:pspace-precise-local-hamiltonian}
    Precise $3$-local Hamiltonian is $\PSPACE$-complete. 
\end{theorem}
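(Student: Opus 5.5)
Since this statement is imported from~\cite{fefferman2018complete}, the natural plan is to reconstruct their argument in two halves. The first half is the equivalence $\PreciseQMA=\PSPACE$, where $\PreciseQMA$ is $\QMA$ with completeness/soundness gap $2^{-\poly(n)}$. The second half is the observation that Kitaev's circuit-to-Hamiltonian reduction~\cite{kitaev2002classical}, together with the perturbative $3$-local gadgets of~\cite{kempe20033}, maps a $\PreciseQMA$ verifier to a Precise $3$-local Hamiltonian instance.

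\textbf{Containment in $\PSPACE$.} Let $N=2^n$ and let $H$ be the $N\times N$ matrix. Each entry of $H$ is computable in $\poly(n)$ space from the list of local terms. I would compute the characteristic polynomial $\det(xI-H)$ with a parallel algorithm such as Csanky's or Berkowitz's. These run in $\mathrm{NC}^2$ in the dimension $N$, i.e.\ depth $\Or(\log^2 N)=\Or(n^2)$. By the standard simulation of uniform $\mathrm{NC}$ circuits in space proportional to depth, every bit of every coefficient is available in $\poly(n)$ space, and the coefficients themselves have only $\poly(n)$ bits.

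Since $H$ is Hermitian, all roots of this polynomial are real. To decide whether the smallest root is at most $a$ or at least $b$, I would count the roots in $(-\infty,a]$ using a Sturm sequence, or equivalently Descartes' rule applied after the shift $x\mapsto x+a$. Each sign evaluation is again a space-efficient arithmetic computation on numbers with $\poly(n)$ bits. The exponentially small gap $b-a>2^{-\poly(n)}$ is harmless here, because all arithmetic is exact on $\poly(n)$-bit rationals.

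\textbf{Hardness.} I would proceed in three steps.
\begin{enumerate}
\item Show $\PSPACE\subseteq\PreciseQMA$. A polynomial-space computation can be made reversible and hence unitary: a single step $U$ acts on $\poly(n)$ qubits and must be iterated $T=2^{\poly(n)}$ times. The verifier cannot apply $U^T$ directly. Following Fefferman--Lin, I would encode acceptance as a spectral property of a polynomial-size circuit, e.g.\ via a history-state (clock) construction in which the witness carries the computation and the verifier checks one random local transition. For a legitimate run the acceptance deviates from the threshold by only $1/T=2^{-\poly(n)}$, which is exactly the precision $\PreciseQMA$ allows.
\item Apply Kitaev's $5$-local reduction with polynomially many clock steps to the resulting $\PreciseQMA$ verifier. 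This preserves an exponentially small, but nonzero, gap $b-a$, since the gap degrades only by a polynomial factor.
\item Reduce locality to $3$ with the Kempe--Regev gadgets, choosing gadget strengths $\poly(n)$ with $\poly(n)$-bit precision so that the gadget error stays below $(b-a)/3$.
\end{enumerate}

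\textbf{Main obstacle.} I expect the hard step to be step~1, the inclusion $\PSPACE\subseteq\PreciseQMA$, i.e.\ realizing an exponentially long computation inside a polynomial-size verifier with a provable $2^{-\poly(n)}$ gap. My first attempt would be a soundness analysis of the history-state check, via the spectral gap of the propagation Hamiltonian. I would bound the smallest eigenvalue of the path-graph Laplacian of length $T$ from below by $\Omega(T^{-2})$, which is still $2^{-\poly(n)}$, and then use the projection lemma to combine it with the input and output penalties.

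A secondary, quantitative concern is step~3. Perturbative gadgets typically require norms polynomial in $1/(b-a)$, which here becomes exponential. To avoid this, I would either keep only the circuit-to-Hamiltonian step at constant locality $5$, or rely on the $3$-local precise construction exactly as in~\cite{fefferman2018complete}. The latter is acceptable because only the $\PSPACE$-hardness is needed downstream for \cref{thm:distance-to-spec-pspace}.
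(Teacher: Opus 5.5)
Your step 2 fails as stated. The circuit-to-Hamiltonian constructions do not preserve an arbitrary completeness/soundness gap ``up to a polynomial factor.'' For a verifier with $T$ gates, completeness $c$ and soundness $s$, the guarantees are
\[
\lambda_0\le \frac{1-c}{T+1}\ \text{(YES)},\qquad \lambda_0\ge \Omega\!\left(\frac{1-s}{T^3}\right)\ \text{(NO)}.
\]
The two thresholds shrink by \emph{different} polynomial factors. They are therefore separated only if $1-c\lesssim (1-s)/T^2$.

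In ordinary $\QMA$ one arranges this by error reduction. With $c-s=2^{-\poly(n)}$, however, amplification costs $2^{\poly(n)}$ repetitions and is unavailable. For a generic $\PreciseQMA$ verifier, say $c\approx 2/3$, the YES bound $\approx 1/(3T)$ lies far above the NO bound $\approx 1/(3T^3)$, and the reduction yields nothing. Your own description of step 1 (``acceptance deviates from the threshold by only $1/T$'') suggests exactly such a two-sided threshold.

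The missing idea, on which the paper's proof rests, is perfect completeness for $\PreciseQMA$ \cite[Theorem 25]{fefferman2018complete}. With $c=1$ the YES energy is exactly $0$. Soundness $s\le 1-2^{-p(n)}$ then gives NO energy at least $2^{-p(n)}/T^3$. Your step-1 history-state verifier could in fact supply this property, since the honest history state of a deterministic reversible computation passes every check with probability one. You would need to state that explicitly and carry it through, rather than appeal to a polynomial degradation of the gap.

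Several secondary points also need correcting.

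\textbf{Kempe--Regev is not a gadget reduction.} \cite{kempe20033} is a direct $3$-local circuit-to-Hamiltonian construction with polynomially bounded weights and the bounds above. The paper applies it directly to the perfectly complete verifier and skips Kitaev's $5$-local step. The exponentially large gadget strengths you worry about therefore never arise.

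\textbf{Stopping at $5$-local does not suffice.} That fallback would not prove the stated $3$-local result. It would also raise the locality in \cref{thm:distance-to-spec-pspace} from $5$ to $7$, because $\GG_0$ adds two clock qubits to the locality of $H$.

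\textbf{Bit sizes in containment.} The paper only cites containment. In your argument, the coefficients of $\det(xI-H)$ for a $2^n\times 2^n$ matrix can have about $2^n\poly(n)$ bits (e.g.\ $\det(H)$ for $H=cI$), not $\poly(n)$ bits. Your argument survives only because each bit is produced on demand in $\poly(n)$ space. Since $H$ is Hermitian, Descartes' rule applied to $\det\bigl(xI-(H-aI)\bigr)$ then gives an exact count of the eigenvalues at most $a$.
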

\begin{proof}
    The proof of~\cref{thm:pspace-precise-local-hamiltonian} is available in~\cite[Appendix~E]{fefferman2018complete}. 
    For completeness, we provide the reduction details here, as they will be useful in our proof of~\cref{thm:distance-to-spec-pspace} later.
    For a $\QMA$-verification procedure with $T = \poly(n)$ gates, completeness $c$ and soundness $s$, Kempe and Regev~\cite{kempe20033} show that it can be reduced to a $3$-local Hamiltonian with lowest eigenvalue no more than $(1-c)/(T+1)$ in the YES case, and no less than $(1-s)/T^3$ in the NO case. For $\PreciseQMA$-hard problems, perfect completeness can be assumed~\cite[Theorem 25]{fefferman2018complete}. So it suffices to take $c = 1$ and $s = 1 - 2^{-\poly(n)}$, and the $\QMA$-verification procedure is reduced to a $3$-local Hamiltonian $H$ with the lowest eigenvalue $\lambda_0$ promised to be in either: (YES) $\lambda_0 = 0$ or (NO) $\lambda_0 \ge 2^{-p(n)}/T^3$, where $p(n)$ is a polynomial function in $n$.
\end{proof}

The key idea of our proof is to use a non-Hermitian ``gadget'' to exponentially amplify the spectral gap in a precise $k$-local Hamiltonian. To see how it works, we consider the following $m$-by-$m$ matrix (where $m \ge 1$ is an arbitrary integer and $\delta > 0$):
    \[G =\begin{bmatrix}
    0 & 1 & 0 & \cdots & 0 \\
    0 & 0 & 1 & \cdots & 0 \\
    \vdots &  & \ddots & \ddots & \vdots \\
    0 & \cdots & 0 & 0 & 1 \\
    \delta & 0 & \cdots & 0 & 0
    \end{bmatrix}.\]
The eigenvalue equation $\det(\lambda I - G) = 0$ can be solved analytically:
\[\lambda^m = \delta \implies \lambda_k = e^{2\pi i k/m}\delta^{1/m},\quad k = 0,\dots, m-1.\]

The eigenvalues of $G$ are located on a circle with radius $\delta^{1/m}$.
This means that, even if $\epsilon$ is small, for example, $\delta = 2^{-m}$, by embedding it into the lower left corner of this non-Hermitian gadget matrix $G$, its magnitude can be amplified to a constant $\delta^{1/m} = 1/2$ by evaluating the spectrum of $G$.
In what follows, we generalize this observation for any non-negative Hermitian matrix.

\begin{lemma}\label{lem:spectral-amplify-gadget}
    Suppose $H$ is an $N$-by-$N$ non-negative matrix with eigenvalues $\{\mu_j\}^N_{j=1}$, arranged in an ascending order (counting multiplicity). Given an integer $m\ge 1$, we define the following $Nm$-by-$Nm$ matrix:
    \begin{equation}\label{eqn:non-hermitian-gadget}
        \GG = \begin{bmatrix}
    0 & I & 0 & \cdots & 0 \\
    0 & 0 & I & \cdots & 0 \\
    \vdots &  & \ddots & \ddots & \vdots \\
    0 & \cdots & 0 & 0 & I \\
    H & 0 & \cdots & 0 & 0
    \end{bmatrix},
    \end{equation}
where each $I$ is the $N$-by-$N$ identity matrix. Then, the following holds:
\begin{enumerate}
    \item If $\mu_0 = 0$ (i.e., $H$ is not strictly positive), the matrix $\GG$ has a zero eigenvalue.
    \item If $\mu_0 > 0$ (i.e., $H$ is positive), the matrix $\GG$ is diagonalizable, and its eigenvalues are 
    \begin{equation}\label{eqn:gadget-eigenvalue}
        \lambda_{j,k} = e^{2\pi i k/m}\mu^{1/m}_j,\quad 1 \le j \le N, \quad 0 \le k \le m-1.
    \end{equation}
\end{enumerate}
\end{lemma}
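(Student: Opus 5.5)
The plan is to reduce everything to the eigen-decomposition of $H$, since $H$ is Hermitian and non-negative. Write $H = U D U^\dagger$ with $D = \diag(\mu_1,\dots,\mu_N)$. Conjugating $\GG$ by the block-diagonal unitary $I_m\otimes U$ leaves the identity blocks unchanged and turns the lower-left block into $D$. A permutation similarity (reordering the basis from "block index, then eigen index" to "eigen index, then block index") then brings $\GG$ into the direct sum $\bigoplus_{j=1}^N G_{\mu_j}$. Here $G_{\mu}$ is exactly the $m\times m$ scalar gadget $G$ displayed above, with $\delta$ replaced by $\mu$. Spectrum and diagonalizability are invariant under similarity, and a direct sum is diagonalizable iff each summand is. So it suffices to analyze each scalar gadget $G_\mu$.

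For the scalar gadget, I would compute the characteristic polynomial $\det(\lambda I - G_\mu) = \lambda^m - \mu$. This follows either by cofactor expansion along the first column, or by noting $G_\mu^m = \mu I$ together with the companion-matrix structure: $G_\mu$ is, up to transpose/reversal, the companion matrix of $x^m-\mu$.

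For item 1, if $\mu_0=0$ the corresponding summand has characteristic polynomial $\lambda^m$. Hence $0$ is an eigenvalue of $\GG$. One can also check directly: if $Hv=0$, then the vector $(v,0,\dots,0)$ is in the kernel of $\GG$, because the first block column contains only $H$ in its last row.

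For item 2, if $\mu_j>0$, then $\lambda^m-\mu_j$ has $m$ distinct roots $e^{2\pi i k/m}\mu_j^{1/m}$. A matrix whose characteristic polynomial has distinct roots is diagonalizable, so each $G_{\mu_j}$ is diagonalizable. Hence $\GG$ is diagonalizable with the eigenvalues in \cref{eqn:gadget-eigenvalue}, counted with multiplicity across $j$.

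The only step needing care is justifying the permutation similarity cleanly. I would state it as $\GG = S\otimes I_N + E_{m1}\otimes H$, where $S$ is the upper shift and $E_{m1}$ the corner unit matrix. Then $(I_m\otimes U^\dagger)\GG(I_m\otimes U) = S\otimes I + E_{m1}\otimes D$, which is a sum over $j$ of $(S+\mu_j E_{m1})\otimes \ketbra{j}{j}$. Swapping the tensor factors gives the direct sum, which makes the reduction immediate.
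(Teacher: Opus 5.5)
Your proof is correct, and it takes a different route from the paper.

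The paper works directly with the eigenvalue equation. Writing $v=(v_1,\dots,v_m)$, it derives $v_{k+1}=\lambda^k v_1$ and $Hv_1=\lambda^m v_1$, so every eigenvalue satisfies $\lambda^m=\mu_j$ for some $j$. When $\mu_0>0$, it then exhibits the $Nm$ eigenvectors $(u_j,\lambda_{j,k}u_j,\dots,\lambda_{j,k}^{m-1}u_j)$, built from an eigenbasis $\{u_j\}$ of $H$, and asserts their linear independence ``since $H$ is Hermitian.''

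You instead make the decoupling explicit. You write $\GG=S\otimes I+E_{m1}\otimes H$ and conjugate by $I_m\otimes U$ and a swap permutation to get $\GG$ similar to $\bigoplus_j G_{\mu_j}$. After that you only need the scalar characteristic polynomial $\lambda^m-\mu$ and the fact that a matrix with simple spectrum is diagonalizable.

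What your route buys:
\begin{itemize}
\item It leaves no hidden step. The linear independence the paper asserts actually requires a Vandermonde argument within each fixed $j$, combined with orthogonality of the $u_j$ across different $j$. Your block decomposition replaces both with the distinct-roots criterion applied block by block.
\item It gives the Jordan structure for free. In case~1, each zero eigenvalue of $H$ contributes the nilpotent shift block $G_0=S$ of size $m$. So $\GG$ is in fact non-diagonalizable whenever $m\ge 2$, which sharpens the paper's remark that the zero-eigenvalue block ``can be defective.''
\end{itemize}

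The paper's route, in turn, is more concrete. It writes down the eigenvectors explicitly and shows directly that every eigenvector of $\GG$ has the form $(v_1,\lambda v_1,\dots,\lambda^{m-1}v_1)$ with $v_1$ an eigenvector of $H$, with no Kronecker or permutation bookkeeping.
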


\begin{proof}
    Let $\lambda$ be an eigenvalue of $\GG$ with a corresponding eigenvector $v \in \CC^{Nm}$. We can partition $v$ into $m$ blocks: $v = [v^\top_1, \dots, v^\top_m]^\top$, and the eigenvalue equation can be written as the following system:
    \begin{align*}
    \begin{cases}
        v_2 &= \lambda v_1,\\
        v_3 &= \lambda v_2,\\
        &\vdots\\
        v_m &= \lambda v_{m-1},\\
        H v_1 &= \lambda v_m.
    \end{cases}
    \end{align*}
    By induction, we have $v_{k+1} = \lambda^{k}v_1$ for $k = 1,\dots, m-1$. 
    Substituting this relation into the last equation in the system, we get $Hv_1 = \lambda^m v_1$.
    For $v$ to be an eigenvector of $\GG$, $v_1 \neq 0$. This means $v_1$ has to be an eigenvector of $H$, and all eigenvalues of $\GG$ are given by the roots of
    \[\lambda^m  = \mu_k,\quad \mu_k \in \spec(H).\]

    \textbf{Case 1: $\mu_0 = 0$.}
    Then $\GG$ has an eigenvalue $\lambda = 0$. However, we remark that the matrix $\GG$ may not be diagonalizable since the $0$-eigenvalue block can be defective.
    
    \textbf{Case 2: $\mu_0 > 0$.}
    For positive $H$, the above equation has $m$ distinct solutions for each eigenvalue $\mu_j > 0$. So, for each eigenvector of $H$, we can construct $m$ distinct eigenvectors of $\GG$ as shown above. Since $H$ is Hermitian, the set of $Mm$ constructed eigenvectors of $\GG$ is linearly independent. Therefore, $\GG$ is diagonalizable, and its eigenvalues are as in~\cref{eqn:gadget-eigenvalue}.
\end{proof}

Now, we are ready to prove that~\cref{prob:distance-to-spec} is $\PSPACE$-hard for $k = \mathcal{O}(1)$.

\begin{proof}[Proof of \cref{thm:distance-to-spec-pspace}]
    Following the proof of~\cref{thm:pspace-precise-local-hamiltonian}, any $\PSPACE$ problem can be reduced to a $3$-local Hamiltonian whose ground energy is $0$ in the YES case, and at least $2^{-p(n)}/T^3$ in the NO case, where $p(n)$ is a polynomial and $T = \poly(n)$.

    Choose
    \[
        m := p(n) + 3\lceil \log_2 T\rceil.
    \]
    By~\cref{lem:spectral-amplify-gadget}, we construct the matrix
    \[\GG = \sum^{m-1}_{t=1}\ketbra{t-1}{t}\otimes I + \ketbra{m-1}{0}\otimes H,\]
    which is a $(3+\log_2(m))$-local operator acting on $n+\log_2(m)$ qubits. 
    In the (YES) case, $\GG$ has a zero eigenvalue. In the (NO) case, all the eigenvalues of $\GG$ have magnitude no smaller than 
    \[
        \left(\frac{2^{-p(n)}}{T^3}\right)^{1/m} \ge \left(2^{-p(n)-3\lceil \log_2 T\rceil}\right)^{1/m} = \frac{1}{2}.
    \]
    Therefore, we can solve the precise $k$-local Hamiltonian problem by determining if $z=0$ is in the $\epsilon$-neighborhood of the spectrum of $\GG$ with $\epsilon=1/8$.
    However, the locality of $\GG$ still grows with $n$: $3 + \log_2(m) = \mathcal{O}(\log(n))$, so we again use the unary clock. Using the same one-hot code subspace as in the proof of~\cref{thm:pseudo-spectrum-qma}, the hopping term $\ketbra{t-1}{t}$ can be encoded as a $2$-local operator $\sigma^-_t\otimes \sigma^+_{t-1}$, with the qubit creation and annihilation operators ($\sigma^x$, $\sigma^y$ are Pauli-X and Pauli-Y operators):
    \begin{equation}
        \sigma^+_t = \frac{1}{2}\left(\sigma^x_t -i\sigma^y_t\right),\quad \sigma^-_t = \frac{1}{2}\left(\sigma^x_t +i\sigma^y_t\right).
    \end{equation}
    The resulting gadget operator is
    \[
        \GG_0 := \sum_{t=1}^{m-1}\sigma^-_t\sigma^+_{t-1}\otimes I + \sigma^-_0\sigma^+_{m-1}\otimes H,
    \]
    which is $5$-local and acts on $n+m = \poly(n)$ qubits.
    By adding a penalty term $\HH_{\rm pen}$ as in~\cref{eqn:gadget-hamiltonian-with-penalty} to $\GG_0$, we obtain the full gadget Hamiltonian:
    \begin{equation}
        \widetilde{\GG} = \GG_0 + \gamma \HH_{\rm pen},\qquad \HH_{\rm pen} = \left(\sum_{t=0}^{m-1}\hat{n}_t - 1\right)^2\otimes I.
    \end{equation}
    Let $\SS_r$ denote the clock subspace with exactly $r$ excitations. Each term in $\GG_0$ preserves excitation number, so both $\GG_0$ and $\widetilde{\GG}$ are block diagonal with respect to $\bigoplus_{r=0}^m \SS_r$. On the one-hot sector $\SS_1$, the operator $\GG_0$ coincides exactly with $\GG$, and $\HH_{\rm pen}$ vanishes, so $\widetilde{\GG}|_{\SS_1} = \GG$. Define
    \[
        R := (m-1) + \sum_j \|H_j\|,
    \]
    where $H = \sum_j H_j$. By the triangle inequality, $\|\GG_0\| \le R$, so every restriction $\GG_0|_{\SS_r}$ also has norm at most $R$. For $r \ne 1$, the penalty contributes the scalar $\gamma(r-1)^2$ on $\SS_r$. Choosing
    \[
        \gamma := R + 1,
    \]
    every eigenvalue $\lambda$ of $\widetilde{\GG}|_{\SS_r}$ with $r \ne 1$ obeys
    \[
        |\lambda| \ge \gamma(r-1)^2 - \|\GG_0|_{\SS_r}\| \ge 1.
    \]
    Hence the complement sectors are uniformly separated from the origin, while the one-hot sector retains exactly the spectrum of $\GG$.
    
    Consequently, in the YES case $0 \in \spec(\widetilde{\GG})$, while in the NO case every eigenvalue in the one-hot sector has magnitude at least $1/2$ and every eigenvalue in the complement has magnitude at least $1$. Therefore $0$ is in the $1/8$-neighborhood of $\spec(\widetilde{\GG})$ in the YES case and is at distance at least $1/2$ from $\spec(\widetilde{\GG})$ in the NO case. This proves that~\cref{prob:distance-to-spec} is $\PSPACE$-hard for any $k\ge 5$.
\end{proof}

\section{Quantum singular-value Gaussian-filtered search (QSIGS)}\label{append:qsigs}

\subsection{Algorithmic overview}\label{sec:qsigs}

QSIGS leverages a Gaussian filter-based signal processing approach to compute the singular values of a matrix $A$. 
We assume that $A$ has a singular value decomposition $A = \sum^{M-1}_{m=0} \sigma_m \ketbra{u_m}{v_m}$, where $\{\ket{u_m}\}^{M-1}_{m=0}$, $\{\ket{v_m}\}^{M-1}_{m=0}$ are the left and right singular vectors of $A$, and $\{\sigma_m\}^{M-1}_{m=0}$ are the singular values of $A$ (arranged in an ascending order, counting multiplicity). 
Without loss of generality, we assume all the singular values of $A$ are upper bounded by $1$, i.e., $0 \le \sigma_m \le 1$ for all $0 \le m \le M-1$.  For some $t \ge 0$, we consider applying a $\sin$ transformation on the singular values of $A$: 
\begin{equation}
    \sinsv(tA) \coloneqq \sum^{M-1}_{m=0}\sin(t\sigma_m)\ketbra{u_m}{v_m}.
\end{equation}
Given access to a block-encoding of $A$, this transformation can be implemented using Quantum Singular Value Transformation (QSVT). The output is a unitary operator $U$ that encodes the matrix $\sinsv(tA)$ in the upper left corner, i.e., $(\bra{0^r} \otimes I) U (\ket{0^r}\otimes I) = \sinsv(tA)$, where $r$ is the number of ancilla qubits used in the block-encoding of $A$.
If we apply this unitary operator to a state $\ket{0^r}\otimes \ket{\psi}$ with $\ket{\psi} = \sum^{M-1}_{m=0} c_m \ket{v_m}$ and measure the ancilla register (with $r$ qubits) of $U\ket{\psi}$, the probability of obtaining an all-zero outcome (i.e., $0^r$) can be evaluated by the Born rule:
\begin{equation}\label{eqn:p0t}
    P_0(t) \coloneqq \left\|\sinsv(tA)\ket{\psi}\right\|^2 = \sum^{M-1}_{m=0} p_m \sin^2(\sigma_m t) = \frac{1}{2} - \frac{1}{2}\sum^{M-1}_{m=0} p_m \cos(2\sigma_m t),\quad p_m = |c_m|^2.
\end{equation}
Now, let $t_n$ be i.i.d. samples from a probability distribution $a(t)$, we define a random variable $Z_n$ such that
\begin{equation}
    Z_n = \begin{cases}
        -1, &\text{Measurement outcome is $0^r$;}\\
        1, &\text{Otherwise}.
    \end{cases}
\end{equation}
For a fixed $t$, $Z_n$ is a bounded and $\mathbb{E}[Z_n] = 1-2P_0(t)$.
Given a sample number $N \ge 1$, we define the following filter function for $\theta \in [0,1]$:
\begin{equation}\label{eqn:filter}
    F(\theta) := \frac{1}{N} \sum^N_{n=1} Z_n \exp(-2i \theta t_n).
\end{equation}
For a large enough $N$, the value of the filter function $F(\theta)$ can be approximated by its expectation:
\begin{align}\label{eqn:filter-a}
    \mathbb{E}[F(\theta)] &= \frac{1}{N}\sum_{n=1}^N \mathbb{E}_{t_n,Z_n}\left[Z_n \exp(-2i\theta t_n)\right] = \frac{1}{2}\sum^{M-1}_{m=0} p_m \mathbb{E}_{t\sim a}\left[\exp(2it(\sigma_m - \theta)) + \exp(-2it(\sigma_m+\theta))\right].
\end{align}
If we choose a Gaussian distribution $a(t) = \frac{1}{\sqrt{2\pi}T}\exp(\frac{-t^2}{2T^2})$, its characteristic function is also Gaussian: $\hat{a}(\xi) = \mathbb{E}_{t\sim a}[e^{it\xi}] = \exp(-T^2\xi^2/2)$.
Plugging the characteristic function $\hat{a}(\xi)$ into~\cref{eqn:filter-a}, we have
\begin{equation}\label{eqn:filter-b}
    F(\theta) \approx \frac{1}{2}\sum^{M-1}_{m=0} p_m \left[e^{-2T^2(\sigma_m-\theta)^2} + e^{-2T^2(\sigma_m + \theta)^2}\right].
\end{equation}
Therefore, for a dominant $p_0$ and large enough $T$, the filter function $F(\theta)$ should peak around the ground singular value $\sigma_0$ (note that we have restricted $\theta \in [0,1]$).

To identify the peak of $|F(\theta)|$ over the numerical domain $[0,1]$, we define the set of candidates, denoted by $\Theta = \{\theta_j\}^J_{j=0}$, where
\begin{equation}
    \theta_j = \frac{jq}{T}, \quad 0 \le j \le J \coloneqq \lfloor T/q\rfloor,
\end{equation}
and $q>0$ is a granularity parameter. Using the data pairs $\{(t_n, Z_n)\}^N_{n=1}$, we can evaluate the filter function $F(\theta)$ over the numerical grid $\Theta$. 
The maximizer of the function $|F(\theta)|$ in $\theta \in \Theta$ is provably a good approximation of the ground singular value $\sigma_0$ for sufficiently large $T$ and $N$.

A pseudocode of the entire QSIGS procedure is summarized in~\cref{algo:qsigs}, and an illustrative quantum circuit for generating the data pairs $(t_n, Z_n)$ is shown in~\cref{fig:qsigs_qc}.

\begin{algorithm}[ht!]
\caption{Quantum Singular Value Gaussian Filtered Search}\label{algo:qsigs}
\KwData{Number of data pairs: $N$; sampling distribution $a_T(t)$; block-encoding $O_A$ of $A$; initial state $\rho$; set of candidates $\Theta = \{\theta_j:=jq/T\}_{j=0}^J$.}
\KwResult{$\theta^*$ as an estimate of the ground singular value $\sigma_0$.}

Dataset $d \gets \{\}$\;
Sample $\{t_n\}_{n=1}^N$ independently from $a_T(t)$\;
\For{$n=1,\dots,N$}{
    Construct a quantum circuit $\sin^{\mathrm{SV}}(t_n A)$ using~\cref{lem:sin-cost}\;
    
    Prepare $\ketbra{0^{r}}{0^r}\otimes \rho$ and apply $\sin^{\mathrm{SV}}(t_n A)$\;
    
    Measure the ancilla register and record the outcome $s_n$\;
    
    \eIf{$s_n = 0^{r}$}{
        $Z_n \gets -1$\;
    }{
        $Z_n \gets 1$\;
    }
    
    Append $(t_n, Z_n)$ to $d$\;
}

$F(\theta) \gets \frac{1}{N}\sum_{n=1}^N Z_n \exp(-2i \theta t_n)$ for $\theta \in \Theta$ and data pairs in $d$\;
$\theta^* \gets \arg\max_{\theta \in \Theta} |F(\theta)|$\;
\end{algorithm}

\begin{figure}[!ht]
    \centering
    \includegraphics[width=0.6\linewidth]{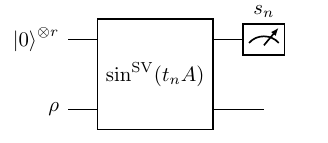}
    \caption{An illustrative quantum circuit for generating the data pairs $(t_n, Z_n)$ in~\cref{algo:qsigs}. The gate here is a block-encoding of $\sin^{\mathrm{SV}}(t_n A)$, which can be implemented using QSVT. The measurement outcome $Z_n$ is determined by whether the ancilla register is in the all-zero state or not.
    }
    \label{fig:qsigs_qc}
\end{figure}

\subsection{Complexity analysis}

For the input model, we adopt the commonly used block-encoding which is a unitary that encodes the information of interest in the sub-blocks. For a general matrix $A$, we call $U_A$ an $(\alpha,a,\epsilon)$-block-encoding of $A$ if
\begin{equation}
    \|A - \alpha(\bra{0^a}\otimes I) U_A (\ket{0^a}\otimes I)\| \leq \epsilon.
\end{equation}
It is clear that $\alpha$ should attain the property that $\alpha \geq \|A\|$ since $A/\alpha$ is stored in a unitary.

The data generating process is for creating a dataset which can be used for singular value estimation. Starting from the filter function, we need to have a formal definition on the truncated Gaussian density function. For a given $T$, and some parameter $\sigma$, we define
\begin{equation}
    a_T^{\text{truncate}}(t)
    =
    \left(1 - \int_{-\sigma T}^{\sigma T} \frac{1}{\sqrt{2\pi }T}\exp\left(\frac{-s^2}{2T^2}\right)\mathbf{1}_{[-\sigma T, \sigma T]} \mathrm{d}s\right)\delta_0(t)
    + \frac{1}{\sqrt{2\pi }T}\exp\left(\frac{-t^2}{2T^2}\right)\mathbf{1}_{[-\sigma T, \sigma T]}.
\end{equation}
The distribution above is simply truncating the possibilities of sampling $t$'s whose absolute values are greater than $\sigma T$.
One may tune the parameter $\sigma$ to adjust the possible max runtime.

After sampling $N$ samples $\{t_n\}_{n=1}^N$ from this distribution, our next step is about implementing the block-encodings of $\{\sin^{\text{SV}}(A t_n)\}_{n=1}^N$ base on the block-encoding of $A$. This is nothing but a simple call of QSVT, which can be detailed as follows:
\begin{lemma}[Cost of implementing $\sin^{\text{SV}}$ transformation]
\label{lem:sin-cost}
    Given an $(1, a, 0)$ block-encoding $O_A$ that encodes $A$,
    a time parameter $t$
    and an error term $\epsilon$, we can implement a unitary $U$ which is a block-encoding of $\sin^{\textrm{SV}}(A t)$ up to error $\epsilon$ with $\Theta\left(|t| + \frac{\log(1/\epsilon)}{\log\log(1/\epsilon)}\right)$ queries to $O_A$.
\end{lemma}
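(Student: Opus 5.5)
The plan is to reduce the statement to standard QSVT with an odd polynomial. Since $\sinsv(tA)=\sum_m \sin(t\sigma_m)\ketbra{u_m}{v_m}$ is an odd singular-value transformation, QSVT (Gily\'en et al.~\cite{gilyen2019quantum}) implements $P^{\mathrm{SV}}(A)$ for any real odd polynomial $P$ with $|P(x)|\le 1$ on $[-1,1]$. It uses $\deg P$ queries to $O_A$ and $O_A^\dagger$, and one extra ancilla for the projector-controlled phases. Since $\sigma_m\in[0,1]$, it suffices to find an odd polynomial $P$ of degree $d=\Or\bigl(|t|+\log(1/\epsilon)/\log\log(1/\epsilon)\bigr)$ with
\[
\sup_{x\in[-1,1]}|P(x)-\sin(tx)|\le\epsilon .
\]
The operator-norm error then follows, because $\|P^{\mathrm{SV}}(A)-\sinsv(tA)\|=\max_m|P(\sigma_m)-\sin(t\sigma_m)|$.

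To construct $P$, I would take the Jacobi--Anger expansion
\[
\sin(tx)=2\sum_{k\ge 0}(-1)^k J_{2k+1}(t)\,T_{2k+1}(x),
\]
which contains only odd Chebyshev polynomials, so any truncation is automatically odd. I would truncate at degree $d=2K+1$. Since $|T_j(x)|\le 1$ on $[-1,1]$, the truncation error is at most $2\sum_{j>d}|J_j(t)|$. Using the bound $|J_j(t)|\le (|t|/2)^j/j!$, this tail is bounded by $\Or\bigl((e|t|/2d)^{d}\bigr)$ once $d\ge e|t|$. Requiring this to be at most $\epsilon/2$ gives
\[
d=\Theta\!\left(|t|+\frac{\log(1/\epsilon)}{\log\log(1/\epsilon)}\right).
\]
This is the standard estimate used in Hamiltonian simulation by QSP/QSVT. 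To get the $\log\log$ denominator one solves $d\log(d/e|t|)\gtrsim\log(1/\epsilon)$ separately in the two regimes where $|t|$ or $\log(1/\epsilon)$ dominates. The truncated polynomial may slightly exceed $1$ in absolute value, so I would rescale it to $P/(1+\epsilon/2)$. This keeps $|P|\le1$ and changes the error only by a constant factor.

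The main obstacle is the bookkeeping in the truncation bound that produces the exact $\log(1/\epsilon)/\log\log(1/\epsilon)$ term. I would handle it by citing the known Bessel tail lemma from the Hamiltonian simulation literature rather than rederiving it. For the matching lower bound implicit in $\Theta$, I would argue as in Hamiltonian simulation lower bounds: a degree-$d$ polynomial cannot approximate $\sin(tx)$ to constant accuracy unless $d=\Omega(|t|)$, since $\sin(tx)$ oscillates $\Omega(|t|)$ times. The $\epsilon$-dependence lower bound is likewise the known no-fast-forwarding/precision lower bound, so I would cite it.
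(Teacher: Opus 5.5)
\textbf{A step that cannot be completed.} Your route is the paper's: truncate the Jacobi--Anger series of $\sin(tx)$ to odd Chebyshev polynomials, rescale, and apply real odd QSVT. The QSVT, parity, operator-norm and rescaling steps are all fine. The step that fails is the one you flag as the main obstacle. The degree bound $d=\Theta\bigl(|t|+\log(1/\epsilon)/\log\log(1/\epsilon)\bigr)$ does not follow from $d\log\bigl(2d/(e|t|)\bigr)\gtrsim\log(1/\epsilon)$, and it is false uniformly in $(t,\epsilon)$.

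Put $L=\log(1/\epsilon)$ and look at the crossover window $|t|=L/\log L$, which your two-regime split skips. Taking $d=C(|t|+L/\log L)=2C|t|$ gives $d\log\bigl(2d/(e|t|)\bigr)=2C\log(4C/e)\,L/\log L$. This reaches $L$ only if $C\gtrsim\log L/\log\log L$, so $C$ cannot be a constant.

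This is not slack in your tail estimate. Take any polynomial $p$ of degree $d$. Use orthogonality of $T_{d+1}$ to $p$ in the Chebyshev weight, and H\"older with $\int_0^\pi|\cos((d+1)\theta)|\,\mathrm{d}\theta=2$. Together they give $\|\sin(t\,\cdot)-p\|_{\infty}\ge\frac{\pi}{4}|a_{d+1}|$, where the $a_k=\pm 2J_k(t)$ for odd $k$ are exactly your Chebyshev coefficients. If $d+1$ is even, use $a_{d+2}$ instead. By the Debye asymptotics, $|J_d(t)|=e^{-\Theta(d)}$ when $d/|t|$ is a fixed constant larger than one, at a rate depending only on that constant. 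Hence no degree-$d$ polynomial, and so no QSVT circuit with $\Or(|t|+L/\log L)$ queries, reaches error $\epsilon$ in this window.

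\textbf{What your argument actually gives, and how to fix the statement.} Your argument, and the literature you would cite, yields $d=\Theta\bigl(|t|+\log(1/\epsilon)/\log(e+\log(1/\epsilon)/|t|)\bigr)$. This is the form in which the Bessel-tail estimate appears in \cite{gilyen2019quantum}. It is $\Or\bigl(|t|+\log(1/\epsilon)/\log\log(1/\epsilon)\bigr)$ when $|t|\ge\log(1/\epsilon)$, or when $|t|\le\log(1/\epsilon)^{1-\delta}$ for a fixed $\delta>0$. In between, at $|t|=L/\log L$, it exceeds the stated bound by a factor of order $\log L/\log\log L$.

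The paper's two-line proof asserts the same $\log\log(1/\epsilon)$ form with no further justification, so the defect lies in the statement itself, not only in your write-up. You should prove the lemma with the $\log(e+\log(1/\epsilon)/|t|)$ denominator, or with the cruder $\Or(|t|+\log(1/\epsilon))$. Either is enough for how QSIGS uses the lemma, since only the linear dependence on $|t|$ matters there.

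For the lower half of $\Theta$, you do not need no-fast-forwarding results, since the statement concerns the query count of this construction. Your oscillation count gives the $\Omega(|t|)$ part, and the coefficient bound above gives the $\epsilon$-dependence for any polynomial QSVT can implement.
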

\begin{proof}
    Note that the $\sin$ function in $[-1,1]$ can be approximated within error $\epsilon$ by an odd polynomial of degree $R$ where
    $R = \Theta\left(|t| + \frac{\log(1/\epsilon)}{\log\log(1/\epsilon)}\right)$~\cite{gilyen2019quantum} using the Jacobi-Anger expansion~\cite{abramowitz1965handbook}. Using Quantum Singular Value Transformation allows us to construct the desired block-encoding with $\Theta(R) = \Theta\left(|t| + \frac{\log(1/\epsilon)}{\log\log(1/\epsilon)}\right)$ queries to $O_A$ or $O_A^\dag$.
\end{proof}

Recall that the filter function is $F(\theta) = \frac{1}{N}\sum^N_{n=1}Z_n \exp(-2i\theta t_n)$. In the following lemma, we characterize the empirical error due to the finite sample size and the truncated Gaussian filter.

\begin{lemma}[Empirical error]\label{lem:empirical-error}
    Given an initial state $\rho$, let $\{(t_n,Z_n)^N_{n=1}\}$ be the data pairs collected in~\cref{algo:qsigs}.
    Let $\{\ket{v_m}\}^{M-1}_{m=0}$ be the right singular vectors of $A$, arranged by ascending singular values. Denote $p_m := \Tr[\ketbra{v_m}{v_m}\rho]$.
    We define 
    \begin{equation}
        E(\theta) := \frac{1}{N} \sum^N_{n=1} Z_n \exp(-2i \theta t_n) - \frac{1}{2}\sum^{M-1}_{m=0} p_m \left[e^{-2T^2(\sigma_m-\theta)^2} + e^{-2T^2(\sigma_m + \theta)^2}\right].
    \end{equation}
    \begin{itemize}
        \item Given an arbitrary $\delta > 0$, by choosing $\sigma = \mathcal{O}(\log^{1/2}(1/\delta))$, we have 
        \begin{equation}\label{eqn:truncation-error}
            \left|\mathbb{E}\left[\frac{1}{N} \sum^N_{n=1} Z_n \exp(-2i \theta t_n)\right] - \frac{1}{2}\sum^{M-1}_{m=0} p_m \left[e^{-2T^2(\sigma_m-\theta)^2} + e^{-2T^2(\sigma_m + \theta)^2}\right]\right| \le \delta/2.
        \end{equation}
        \item Given an arbitrary $\eta > 0$, and $\Theta = \{\theta_j\}^J_{j=0}$. By choosing $N = \mathcal{O}\left(\frac{\log(J/\eta)}{\delta^2}\right)$, we have
        \begin{equation}\label{eqn:success-prob-empirical}
            \mathbb{P}\left[\max_{\theta \in \Theta}|E(\theta)| \ge \delta \right] \le \eta.
        \end{equation}
    \end{itemize}
\end{lemma}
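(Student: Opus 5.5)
The argument has two independent parts, matching the two bullets of \cref{lem:empirical-error}: a deterministic bias bound coming from the truncation of the Gaussian, and a concentration bound combined with a union bound over the grid $\Theta$.

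\emph{Bias from truncation.} Let $a_T$ be the untruncated Gaussian and set $g(t):=\mathbb{E}[Z\mid t]\,e^{-2i\theta t}$. By \cref{eqn:p0t} we have $\mathbb{E}[Z\mid t]=1-2P_0(t)=\sum_m p_m\cos(2\sigma_m t)$. Expanding the cosine into exponentials and using $\hat a(\xi)=e^{-T^2\xi^2/2}$ as in \cref{eqn:filter-a,eqn:filter-b} gives
\[
\mathbb{E}_{t\sim a_T}[g(t)]=\tfrac12\sum_m p_m\bigl[e^{-2T^2(\sigma_m-\theta)^2}+e^{-2T^2(\sigma_m+\theta)^2}\bigr].
\]
The truncated distribution $a_T^{\rm truncate}$ differs from $a_T$ in two places: the tail mass on $|t|>\sigma T$ is removed, and that same mass is placed as an atom at $t=0$. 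Since $|g(t)|\le\sum_m p_m\le 1$ for every $t$, the bias is at most $2\,\mathbb{P}_{a_T}[|t|>\sigma T]$. The standard Gaussian tail bound gives $\mathbb{P}_{a_T}[|t|>\sigma T]\le 2e^{-\sigma^2/2}$, so the bias is at most $4e^{-\sigma^2/2}$. Choosing $\sigma=\sqrt{2\log(8/\delta)}=\mathcal{O}(\log^{1/2}(1/\delta))$ makes this at most $\delta/2$, which is \cref{eqn:truncation-error}. The $1/N$ average drops out of this step because the pairs $(t_n,Z_n)$ are identically distributed.

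\emph{Concentration.} Fix $\theta\in\Theta$. The variables $X_n:=Z_ne^{-2i\theta t_n}$ are i.i.d. with $|X_n|\le 1$. I will apply Hoeffding's inequality separately to the real and imaginary parts, each of which lies in $[-1,1]$. This gives
\[
\mathbb{P}\bigl[|\tfrac1N\textstyle\sum_n X_n-\mathbb{E}X_1|\ge \delta/2\bigr]\le 4\exp(-N\delta^2/32).
\]
On the complement of this event, the triangle inequality with the bias bound from the first part yields $|E(\theta)|<\delta$. A union bound over the $J+1$ grid points then bounds the failure probability by $4(J+1)e^{-N\delta^2/32}$. Taking $N=\lceil 32\log(4(J+1)/\eta)/\delta^2\rceil=\mathcal{O}(\log(J/\eta)/\delta^2)$ makes this at most $\eta$, which is \cref{eqn:success-prob-empirical}.

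The only real subtlety is that each $Z_n$ is generated by a two-stage random process, first $t_n$ and then the measurement outcome. Because the pairs are nevertheless i.i.d. jointly, Hoeffding applies directly to the joint distribution. I am also implicitly assuming exact implementation of $\sinsv(t_nA)$; any QSVT error from \cref{lem:sin-cost} could be absorbed into $\delta$ by choosing that error to be $\mathcal{O}(\delta)$.
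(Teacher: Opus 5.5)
Your proposal is correct and follows essentially the same route as the paper: Hoeffding applied separately to the real and imaginary parts of $Z_n e^{-2i\theta t_n}$, the triangle inequality with the truncation bias to pass from $\delta/2$ to $\delta$, and a union bound over the grid $\Theta$. The differences are minor. You prove the truncation-bias bound directly, whereas the paper cites Ding et al. You apply Hoeffding to the uncentered variables in $[-1,1]$, which gives the constant $32$ instead of the paper's $128$. You also count the $J+1$ grid points correctly.
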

\begin{proof}
The proof of~\cref{eqn:truncation-error} follows a similar line as in~\cite[Lemma A.1, Eq. (19)]{ding2024quantum}. 
For a fixed $\theta \in [0,1]$, define
\[
    \widetilde{E}(\theta) := \frac{1}{N} \sum_{n=1}^N Z_n e^{-2i\theta t_n} - \mathbb{E}\left[Z_n e^{-2i\theta t_n}\right].
\]
Then both $\Re X_n(\theta)$ and $\Im X_n(\theta)$ lie in $[-2,2]$, where $X_n(\theta) := Z_n e^{-2i\theta t_n} - \mathbb{E}[Z_n e^{-2i\theta t_n}]$. Applying Hoeffding's inequality separately to the real and imaginary parts and using the union bound, we obtain
\begin{equation}
\mathbb{P}[|\widetilde{E}(\theta)| \ge \delta/2] \le \mathbb{P}[|\Re \widetilde{E}(\theta)| \ge \delta/4] + \mathbb{P}[|\Im \widetilde{E}(\theta)| \ge \delta/4] \le 4\exp(-N\delta^2/128).
\end{equation}
If we choose $\sigma = \mathcal{O}(\log^{1/2}(1/\delta))$ so that the truncation error in~\cref{eqn:truncation-error} is at most $\delta/2$, then $|E(\theta)| \le |\widetilde{E}(\theta)| + \delta/2$. Hence
\begin{equation}
\mathbb{P}[|E(\theta)| \ge \delta] \le \mathbb{P}[|\widetilde{E}(\theta)| \ge \delta/2] \le 4\exp(-N\delta^2/128).
\end{equation}
Therefore, over the finite search space $\Theta$, by the union bound, we have
\begin{equation}\label{eqn:union-bound-on-mesh}
    \mathbb{P}[\max_j |E(\theta_j)| \ge \delta] \le 4J \exp(-N \delta^2/128).
\end{equation}
It suffices to choose $N = \mathcal{O}(\log(J/\eta)/\delta^2)$ so $\mathbb{P}[\max_{\theta \in \Theta}|E(\theta)| \ge \delta] \le \eta$.
This proves~\cref{eqn:success-prob-empirical}.
\end{proof}

Now we turn to prove the performance guarantee of~\cref{algo:qsigs}.

\begin{theorem}\label{thm:qsigs}
    Assume $p_0 = \Tr[\ketbra{v_0}{v_0}\rho] \ge 0.8$, where $\rho$ is an initial state and $\ket{v_0}$ is the ground (right) singular vector of $A$. For a fixed $T>0$ and $\eta > 0$, we choose the following parameters in~\cref{algo:qsigs}:
    \begin{equation}
        N = \mathcal{O}(\log(T/\eta)),\quad q = \sqrt{\frac{\log(10/7)}{2}}.
    \end{equation}
    Let $\theta^*$ be the output of~\cref{algo:qsigs}. With success probability at least $1-\eta$, we have $|\sigma_0 - \theta^*| \le 3q/T$.
\end{theorem}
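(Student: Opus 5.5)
We follow the QMEGS-style argument: first control the empirical filter uniformly on the grid $\Theta$, then show that the deterministic Gaussian profile has a dominant peak near $\sigma_0$, and finally combine the two.

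\textbf{Step 1: uniform control on the grid.} We apply \cref{lem:empirical-error} with a constant $\delta$, to be fixed below (something like $\delta = 0.05$). Since $J=\lfloor T/q\rfloor = \mathcal{O}(T)$, choosing $N=\mathcal{O}(\log(T/\eta))$ gives, with probability at least $1-\eta$,
\[
\max_{\theta\in\Theta}|F(\theta)-G(\theta)|\le\delta,
\qquad
G(\theta):=\tfrac12\sum_m p_m\bigl[e^{-2T^2(\sigma_m-\theta)^2}+e^{-2T^2(\sigma_m+\theta)^2}\bigr].
\]
The truncation parameter $\sigma$ is then a constant, so the maximal evolution time is $\mathcal{O}(T)$. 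From here on we work on this event and compare maximizers of $|F|$ with the values of $G$ on the grid.

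\textbf{Step 2: a lower bound at a grid point near $\sigma_0$.} The grid spacing is $q/T$, so some $\theta_{j_0}\in\Theta$ satisfies $|\theta_{j_0}-\sigma_0|\le q/T$. The $m=0$ term alone then gives
\[
G(\theta_{j_0})\ge \tfrac12 p_0\, e^{-2q^2}.
\]
The choice $q=\sqrt{\log(10/7)/2}$ makes $e^{-2q^2}=0.7$, and since all terms of $G$ are nonnegative, $G(\theta_{j_0})\ge \tfrac12\cdot 0.8\cdot 0.7 + (\text{nonnegative})\ge 0.28$. If $\sigma_0\le q/T$, the mirror term $e^{-2T^2(\sigma_0+\theta)^2}$ only adds mass, so the bound still holds.

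\textbf{Step 3: an upper bound away from $\sigma_0$.} Take any $\theta\in\Theta$ with $|\theta-\sigma_0|>3q/T$. Both $m=0$ terms are at most $e^{-18q^2}=0.7^9\approx 0.04$, so the $m=0$ contribution is at most $\tfrac12 p_0(2\cdot 0.04)\le 0.04$. This uses $\theta,\sigma_0\ge 0$, so $\sigma_0+\theta\ge|\theta-\sigma_0|$. The terms with $m\ge1$ are bounded crudely by $\tfrac12(1-p_0)\cdot 2 = 1-p_0\le 0.2$. Hence $G(\theta)\le 0.24$.

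\textbf{Step 4: conclusion.} With $\delta=0.01$ we get $|F(\theta_{j_0})|\ge 0.27$ and $|F(\theta)|\le 0.25$ for every grid point with $|\theta-\sigma_0|>3q/T$. The maximizer $\theta^*$ therefore satisfies $|\theta^*-\sigma_0|\le 3q/T$.

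\textbf{Main obstacle.} The difficulty is bookkeeping the constants so that the threshold $p_0\ge 0.8$ actually yields a strict gap. The crude bound $1-p_0$ on the $m\ge1$ terms is the weak point. If the margin fails (here $0.28$ against $0.24$ leaves room for $\delta$ of about $0.02$), I would tighten it by noting that each of the two mirror terms is at most $1$ and at most one of them is near $1$.
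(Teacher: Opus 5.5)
Your proposal is correct and follows essentially the same route as the paper: uniform control on the grid with $\delta=0.01$, a grid point within $q/T$ of $\sigma_0$ giving $|F|\ge \tfrac{p_0}{2}e^{-2q^2}-\delta\ge 0.27$, and the bound $p_0e^{-18q^2}+(1-p_0)+\delta$ outside $[\sigma_0-3q/T,\sigma_0+3q/T]$. Two small bookkeeping fixes: discard the tentative $\delta=0.05$ from Step 1, since it would destroy the gap; and since $0.7^9\approx 0.0404$, your decoupled bound is about $0.2504$ rather than $0.25$, which still lies below $0.27$ (the paper keeps $p_0e^{-18q^2}+(1-p_0)$ together and evaluates it at $p_0=0.8$ to get $<0.243$).
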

\begin{proof}
    We set $\delta = 0.01$, and $\sigma = \mathcal{O}(\log^{1/2}(1/\delta))$ is a constant. By~\cref{lem:empirical-error}, with probability at least $1 - \eta$, we have
    \begin{equation}
        \max_{\theta \in \Theta}|E(\theta)| \le 0.01,
    \end{equation}
    by choosing $N = \mathcal{O}(\log(J/\eta)) = \mathcal{O}(\log(T/\eta))$. In what follows, we discuss two cases:
    \begin{enumerate}
        \item If $\theta \in [\sigma_0 - q/T, \sigma_0+q/T]$, we have
    \begin{align}
        |F(\theta)| &\ge \frac{1}{2}\sum^{M-1}_{m=0} p_m \left[e^{-2T^2(\sigma_m-\theta)^2} + e^{-2T^2(\sigma_m + \theta)^2}\right] - |E(\theta)|\\ 
        &\ge \frac{p_0}{2}e^{-2q^2} - |E(\theta)|.\label{eqn:lower-bound-Ftheta}
    \end{align}

    \item If $\theta \notin [\sigma_0 - 3q/T, \sigma_0+3q/T]$, we have
    \begin{align}
        |F(\theta)| &\le \frac{1}{2}\sum^{M-1}_{m=0} p_m \left[e^{-2T^2(\sigma_m-\theta)^2} + e^{-2T^2(\sigma_m + \theta)^2}\right] + |E(\theta)|\\
        &\le p_0 e^{-18q^2} + (1-p_0) + |E(\theta)|.\label{eqn:upper-bound-Ftheta}
    \end{align}
    Indeed, for the $m=0$ term we have $|\theta-\sigma_0| > 3q/T$, and also $\sigma_0+\theta \ge |\theta-\sigma_0| > 3q/T$ because $\sigma_0,\theta \ge 0$. Hence both exponentials in the $m=0$ contribution are at most $e^{-18q^2}$. For the remaining terms, each bracket is at most $2$, so their total contribution is bounded by $1-p_0$.
    \end{enumerate}
    
    Given our choice $q = (\log(10/7)/2)^{1/2} \approx 0.4223$, with probability at least $1-\eta$, we have
    \begin{align*}
        \min_{\theta \in [\sigma_0 - q/T,\sigma_0+q/T]} |F(\theta)| &\ge 0.4e^{-2q^2}-0.01 = 0.27,\\
        \max_{\theta \notin [\sigma_0 - 3q/T,\sigma_0+3q/T]}|F(\theta)| &\le 0.8e^{-18q^2}+0.21 < 0.243.
    \end{align*}
    Since the mesh spacing is $q/T$, there exists at least one grid point in $\Theta$ that lies in $[\sigma_0-q/T,\sigma_0+q/T]$. Therefore the maximizer cannot lie outside $[\sigma_0-3q/T,\sigma_0+3q/T]$, and hence $|\theta^* - \sigma_0| \le 3q/T$ with probability at least $1-\eta$.
\end{proof}

We remark that the condition $p_0 \ge 0.8$ in the previous theorem is chosen for the ease of analysis, and it can be relaxed.
To see this, note that to ensure~\cref{eqn:upper-bound-Ftheta} is upper bounded by~\cref{eqn:lower-bound-Ftheta}, we essentially need $p_0 e^{-\mathcal{O}(q^2)}> (1-p_0)+2\delta$.
As long as $p_0$ satisfies the Sufficiently Dominant Condition~\cite{ding2024quantum}, i.e. $p_0 > 0.5$, we can always pick a $\delta < p_0-1/2$ so there always exists a $q >0$ such that the peak of $|F(\theta)|$ is obtained in the interval $[\sigma-3q/T,\sigma+3q/T]$.
In this case,~\cref{algo:qsigs} still works with a similar asymptotic complexity.

\section{Circuit implementation of the discrete-time Lindbladian dynamics}
\label{append:circuit-implementation}
While several quantum algorithms for simulating continuous-time Lindbladian dynamics have been proposed~\cite{kliesch2011dissipative, cleve2017efficient, childs2017efficient, li2022simulating, ding2024simulating}, they often require complicated algorithmic gadgets that are beyond the capability of near-term quantum devices. Based on the discrete-time Lindbladian dynamics discussed in~\cite{ding2024single}, we propose a simple discrete-time circuit implementation that prepares the ground singular vector of $A_z$. 

Note that the Lindbladian operator $\mathcal{L}$ in~\cref{eqn:lindbladian} has two parts, namely, the coherent part $\mathcal{L}_H[\rho] \coloneqq -i [H, \rho]$ and the dissipative part $\mathcal{L}_K[\rho] \coloneqq K\rho K^\dag - \frac{1}{2}\{K^\dag K, \rho\}$.
To simulate the Lindbladian dynamics for a short time $\tau$, we can use the first-order Trotter formula:
\begin{equation}
\label{eqn:Lindblad-Trotter}
    e^{\mathcal{L}\tau} = e^{\mathcal{L}_K\tau} e^{\mathcal{L}_H \tau}  + \mathcal{O}(\tau^2).
\end{equation}
To approximate the continuous-time evolution over a total time $T$ by repeated first-order Trotter steps, one must choose $\tau$ sufficiently small so that the accumulated Trotter error remains controlled. We will not rely on such a trajectory-approximation regime below. Instead, we regard a single Trotter step itself as a discrete-time quantum channel and use only its fixed-point structure.
Meanwhile, we also notice that the first-order Trotter formula defines a new quantum channel:
\begin{equation}
    \mathcal{N}_\tau[\rho] = e^{\mathcal{L}_K\tau} e^{\mathcal{L}_H \tau}[\rho],
\end{equation}
Because $H$ acts trivially on its ground-state subspace, every density operator supported on that subspace is fixed by $e^{\mathcal{L}_H\tau}$. On the dissipative side, the condition $\hat{f}(\omega)=0$ for $\omega \ge 0$ implies that $K$ annihilates the ground-state subspace: if $\Pi_0$ denotes the ground-state projector of $H$, then $K\Pi_0=0$. Hence $\mathcal{L}_K[\rho]=0$ for every $\rho$ with $\rho=\Pi_0\rho\Pi_0$, so the same holds for $e^{\mathcal{L}_K\tau}$. Therefore every state supported on the ground-state subspace is a fixed point of $\mathcal{N}_\tau$ for any $\tau \ge 0$, including the degenerate case. This motivates iterating $\mathcal{N}_\tau$ as a discrete-time preparation channel. The advantage is that one may take $\tau=\mathcal{O}(1)$ without trying to track the full continuous-time trajectory. The tradeoff is that the resulting convergence rate is a property of the discrete-time channel itself, and need not coincide with the mixing time of the continuous-time Lindbladian.

The coherent part $e^{\mathcal{L}_H \tau}$ can be implemented via standard Hamiltonian simulation. We denote the resulting unitary channel as $\mathcal{U}_\tau[\rho] \coloneqq e^{-i\tau H}\rho e^{i\tau H}$.
The dissipative part can be implemented via Stinespring dilation. Let $\widetilde{K} = \begin{bmatrix} 0 & K^\dag\\K & 0 \end{bmatrix}$ be the Hermitian dilation of the jump opertor $K$, and we define a new quantum channel:
\begin{equation}
    \mathcal{K}_\tau[\rho] = \text{Tr}_a \left( e^{-i \widetilde{K}\sqrt{\tau}}\left[\ketbra{0}{0}_a \otimes \rho\right] e^{i \widetilde{K}\sqrt{\tau}}\right),
\end{equation}
where $a$ stands for the single ancilla qubit introduced in $\widetilde{K}$. 
A key observation is that if we expand $\mathcal{K}[\rho]$ up to order $\mathcal{O}(\tau)$, it matches with the first-order Taylor expansion of the quantum channel $e^{\mathcal{L}_K \tau}$. In other words, for small $\tau$, the channel $\mathcal{K}_\tau[\cdot]$ simulates a Stinespring dilation of $e^{\mathcal{L}_K \tau}$.
For larger $\tau$ (i.e., $\tau = \mathcal{O}(1)$), the channel $\mathcal{K}_\tau[\cdot]$ may deviate from the Lindbladian evolution $e^{\mathcal{L}_K \tau}$, while the ground state of $H$ is always a fixed point of $\mathcal{K}_\tau[\cdot]$, regardless of the choice of $\tau$.

Combining the coherent and dissipative parts of the Lindbladian evolution, we arrive at the following discrete-time dynamics: 
\begin{equation}
    \rho_{k+1} = \Phi_\tau[\rho_k] \coloneqq \mathcal{K}_\tau \circ \mathcal{U}_\tau  [\rho_k],\quad k = 0,1,\dots.
\end{equation}
For any step size $\tau$, the quantum channel $\Phi_\tau[\cdot]$ therefore leaves the ground-state subspace invariant. The quantity $t_k = k\tau$ may still be used as an effective runtime parameter, but for $\tau=\mathcal{O}(1)$ the discrete-time trajectory $\{\rho_k\}$ may not be identified with snapshots of the continuous-time Lindbladian flow.

\section{Mixing time of the Hatano-Nelson model}\label{append:hatano-nelson-mixing}

\subsection{Construction of the Lindbladian operator}

We now consider the Hatano-Nelson model with PBC:
\begin{align}
    H_{\rm HN} = \sum^{n-1}_{j=0} (J+\gamma) \ketbra{j+1}{j} + (J-\gamma) \ketbra{j}{j+1}.
\end{align}
We let $U$ denote the quantum Fourier transform, i.e., $U\ket{j} = \frac{1}{\sqrt{n}}\sum^{n-1}_{k=0} \omega^{jk} \ket{k}$.
Direct calculation shows that $H_{\rm HN}$ can be diagonalized by QFT:
\begin{align}
    U^\dagger H_{\rm HN} U = \sum^{n-1}_{k=0} E_k \ketbra{k}{k},\quad E_k = 2J\cos(2\pi k/n)-2i\gamma\sin(2\pi k/n).
\end{align}
The eigenvalues of $H_{\rm HN}$ are distributed on an ellipse centered at the origin.
At $z = 0$, we form the quadratic Hamiltonian $H_0 = H^\dagger_{\rm HN} H_{\rm HN}$, which is again diagonal in the Fourier basis:
\begin{align}
    H_0 = \sum_k |E_k|^2 \ketbra{\psi_k}{\psi_k},\quad \ket{\psi_k}= U\ket{k}.
\end{align}
Assuming that $0 < \gamma < J = 1$. 
In the thermodynamic limit $n\to \infty$, the ground energy of $H_0$ converges to $4\gamma^2$ (corresponding to the short axis of the ellipse).

Now, we consider two coupling operators: 
\begin{align}
    O_0 = \sum^{n-1}_{k=0} e^{2\pi i k/n} \ketbra{k}{k},\quad O_1 = \sum^{n-1}_{k=0} e^{-2\pi i k/n} \ketbra{k}{k}.
\end{align}
The corresponding jump operators are given by (for $a = 0,1$):
\[K_a = \int f(s) e^{i{H_0}s} O_a e^{-i{H_0}s}~\d s = \sum_{j,k} \hat{f}(\lambda_j-\lambda_k) \bra{\psi_j}O_a\ket{\psi_k} \ketbra{\psi_j}{\psi_k},\]
where $\{(\lambda_j,\ket{\psi_j})\colon j = 0,\dots, n-1\}$ denotes the eigenpairs of $H_0$.
Direct calculation shows that 
\begin{align}
    \bra{\psi_j}O_0\ket{\psi_k} &= \frac{1}{n}\left(\sum_p \omega^{-jp}\bra{p}\right)\left(\sum_{q}e^{2\pi i q/n}\ketbra{q}{q}\right)\left(\sum_r \omega^{kr}\ket{r}\right) = \frac{1}{n} \left(\sum_p \omega^{-jp + kp - p}\right) = \delta_{k,j+1}.
\end{align}
And similarly,
\begin{align}
    \bra{\psi_j}O_1\ket{\psi_k} = \delta_{k,j-1}.    
\end{align} 

For simplicity, we assume $n$ is an integral multiple of $4$, i.e., $n = 4m$. In this case, the (2-fold) ground energy of $H_0$ is obtained at $k = m$ and $k = 3m$:
\begin{align}
    \lambda_m = \lambda_{3m} = 4\gamma^2.
\end{align}
Note that $\lambda_{j}-\lambda_{j+1} < 0$ only when $m \le j \le 2m-1$ and $3m \le j \le 4m-1$ (by identifying $\lambda_{4m} = \lambda_0$). It turns out that 
\begin{align}
    K_0 = U^\dagger \left(\sum^{2m-1}_{j=m} \ketbra{j}{j+1} + \sum^{4m-1}_{j=3m} \ketbra{j}{j+1}\right) U.
\end{align}
Similarly, since $\lambda_j - \lambda_{j-1} < 0$ for $1 \le j \le m$ and $2m+1 \le j \le 3m$, we have
\begin{align}
    K_1 = U^\dagger \left(\sum^{m}_{j=1} \ketbra{j}{j-1} + \sum^{3m}_{j=2m+1} \ketbra{j}{j-1}\right) U .
\end{align}

\begin{figure}[!ht]
    \centering
    \includegraphics[width=0.6\linewidth]{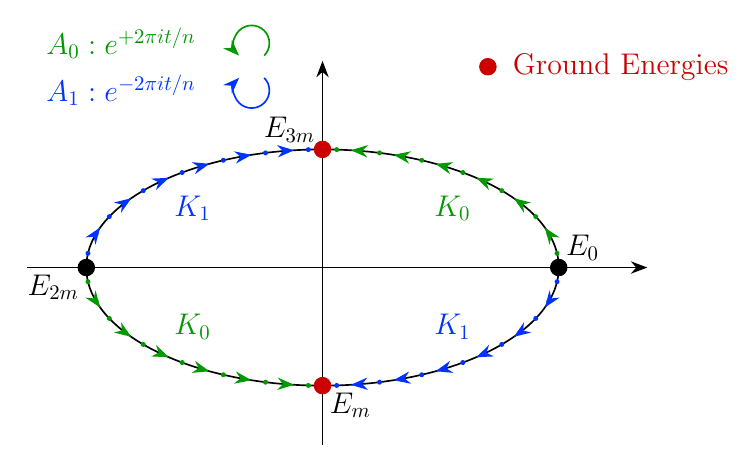}
\caption{Illustration of the spectrum of the Hatano-Nelson model (with $n = 4k$) and the effects of the chosen jump operators $K_0$ and $K_1$.}
\label{fig:hn-pbc}
\end{figure}

By the geometry of the ellipse, $H_0$ can be partitioned into four diagonal blocks (in the Fourier basis):
\begin{align}
    H_0 = U^\dagger \diag(D^{\downarrow}, D^{\uparrow}, D^{\downarrow}, D^{\uparrow}) U, 
\end{align}
with 
\[D^{\downarrow} = \diag(\lambda_0, \lambda_1,\dots, \lambda_{m-1}), D^{\uparrow} = \diag(\lambda_{m}, \lambda_{m-1},\dots, \lambda_1),\]
and $\lambda_k = 2(J^2+\gamma^2)+2(J^2-\gamma^2)\cos(\pi k/m)$ for $k = 0, \dots, m$.

\subsection{Convergence analysis with population dynamics}

Now, we consider the population master equation of the dynamics.
Define a probability vector $\vp(t) = (p_j(t))^{n-1}_{j=0}$ with $p_j(t) \coloneqq \Tr[\ketbra{\psi_j}{\psi_j}\rho(t)]$, where $\{\ket{\psi_j}\}$ are the eigenbasis (i.e., Fourier basis) of $H_0$.
Since $\mathcal{L}^\dagger$ maps every energy basis to a linear combination of neighboring energy eigenbasis, the evolution of $\vp(t)$ can be expressed by the following forward Kolmogorov equation:
\begin{align}
    p'_j(t) = \begin{cases}
        p_{j-1}(t) - p_j(t) & j=1,\dots,m-1,2m+1,\dots,3m-1,\\
        p_{j+1}(t) - p_j(t) & j=m+1,\dots,2m-1,3m+1,\dots,4m-1,\\
        - 2p_j(t) & j = 0, 2m, \\
        p_{j+1} + p_{j-1} & j = m, 3m.
    \end{cases}
\end{align}

To simplify the model, we now introduce a new probability vector $\vq(t)\in \RR^{m+1}$ such that
\begin{align}
    q_j(t) = \begin{cases}
        p_0(t) + p_{2m}(t) & j = 0,\\
        p_j(t) + p_{2m-j}(t) + p_{2m+j}(t) + p_{4m-j}(t) & j = 1,\dots, m-1,\\
        p_m(t) + p_{3m}(t) & j = m.
    \end{cases}
\end{align}
The evolution of $\vq(t)$ is governed by
\begin{align}\label{eq:generator}
    \vq' = L \vq, \quad L = \begin{bmatrix}
        -2 & 0 & 0 & \cdots & 0\\
        2 & -1 & 0 & \ddots & \vdots\\
        \vdots & \ddots & \ddots & \ddots & 0\\
        \vdots & & 1 & -1 & 0\\
        0 & \cdots & \cdots & 1 & 0
    \end{bmatrix}.
\end{align}
In probabilistic terms, from any state $j \in \{1,\dots,m-1\}$ the chain waits an exponential time of mean $1$ and then jumps to $j+1$. The only exception is the state $j=0$, which waits an exponental time of mean $1/2$ before jumping to $j=1$.
The state $j=m$ is absorbing, so the chain has a unique stationary distribution $\pi = \ve_m$.
Note that the holding time on $j=0$ is, on average, shorter than the rest of the states $1 \le j \le m-1$. The mixing time of the chain only becomes longer if we replace the edge entries $\pm 2$ to $\pm 1$, because this increases the mean holding time at $j=0$. 
In what follows, to simplify the analysis and obtain an analytical bound on the mixing time, we instead consider the modified population dynamics that replace the $\pm 2$ entries in the generator matrix $L$ with $\pm 1$, respectively.
By the Markov property, for this modified chain the holding times on the states $j\in \{0,\dots, m-1\}$ are independent $\mathrm{Exp}(1)$ random variables.

Let $\mu \in \RR^{m+1}$ be any probability vector, then
\begin{equation}\label{eq:tv-abs}
\norm{\mu-\pi}_{\mathrm{TV}} = 1-\mu_m.
\end{equation}

We now denote $\mc{X} = \{0,\dots, m\}$ as the state space.
For $x \in \mc{X}$, let $\vp^{(x)}(t)$ denote the law vector of $X_t$ under the initial condition $X_0=x$. In particular, $\vp^{(x)}(0)=\ve_x$. Define the absorption time 
\begin{equation}\label{eq:abs-time}
T_x := \inf\{t \ge 0 : X_t = m \mid X_0=x\}.
\end{equation}
Since state $m$ is absorbing, the event $\{X_t \ne m\}$ is the same as $\{T_x > t\}$, and therefore
\begin{equation}\label{eq:tv-survival}
\norm{\vp^{(x)}(t)-\pi}_{\mathrm{TV}} = \mathbb{P}(X_t \ne m \mid X_0=x) = \mathbb{P}(T_x > t).
\end{equation}
Starting from $x \le m-1$, the modified chain must execute exactly $m-x$ jumps to reach state $m$. Since each pre-absorption holding time is an independent $\mathrm{Exp}(1)$ random variable, $T_x$ is distributed as a sum of $m-x$ i.i.d.\ $\mathrm{Exp}(1)$ variables, i.e.,
\begin{equation}\label{eq:Tx-sum}
T_x := \sum_{k=x}^{m-1} E_k,
\end{equation}
where $(E_k)_{k \ge 1}$ represents the holding time on state $k$. Define partial sums $S_0:=0$ and
\begin{equation}\label{eq:Sn}
S_n := \sum^{m-1}_{k=m-n} E_k \qquad (n \ge 1).
\end{equation}
Define the counting process
\begin{equation}\label{eq:Nt}
N_t := \max\{n \in \NN: S_n \le t\} \qquad (t \ge 0).
\end{equation}
Then for each fixed $t \ge 0$, the random variable $N_t$ satisfies
\begin{equation}\label{eq:Nt-poisson}
\mathbb{P}(N_t = n) = e^{-t}\frac{t^n}{n!}, \qquad n \in \NN \cup \{0\}.
\end{equation}
That is, $N_t \sim \mathrm{Poisson}(t)$. For $x \in \mc{X}$ and $t \ge 0$,
\begin{equation}\label{eq:abs-prob-poisson}
\mathbb{P}(X_t = m \mid X_0=x) = \mathbb{P}(N_t \ge m-x).
\end{equation}
Consequently,
\begin{equation}\label{eq:tv-poisson-2}
\norm{\vp^{(x)}(t)-\pi}_{\mathrm{TV}}
=\mathbb{P}(T_x>t)
=\mathbb{P}(S_{m-x}>t)
=\mathbb{P}(N_t \le m-x-1).
\end{equation}
From the worst case choice $x=0$, we have
\begin{equation}\label{eq:tv-explicit}
\norm{\vp^{(0)}(t)-\pi}_{\mathrm{TV}}
=\mathbb{P}(N_t \le m-1)
=e^{-t}\sum_{k=0}^{m-1}\frac{t^k}{k!}.
\end{equation}
Therefore, the mixing time of the chain is
\begin{equation}\label{eq:tmix}
t_{\mathrm{mix}}(\epsilon) := \inf\Bigl\{t \ge 0 : \sup_{x \in \mc{X}} \norm{\vp^{(x)}(t)-\pi}_{\mathrm{TV}} \le \epsilon \Bigr\}= \inf\Bigl\{t \ge 0 : \mathbb{P}(N_t \le m-1) \le \epsilon \Bigr\}.
\end{equation}
Recall the standard Chernoff estimates for Poisson tails: given $Z \sim \mathrm{Poisson}(\lambda)$ for some $\lambda > 0$,
\begin{align}
    \mathbb{P}(Z \le \lambda-a) &\le \exp\Bigl(-\frac{a^2}{2\lambda}\Bigr),\quad a \in [0, \lambda],\label{eqn:chernoff-1}
\end{align}
Assume that $L_\epsilon := \log(1/\epsilon) \le m$. We set $t := m + 3\sqrt{mL_\epsilon}$ and $a := t-(m-1)$. Note that $t > m$ so $a \ge 0$. Then using \cref{eqn:chernoff-1} with $N_t \sim \mathrm{Poisson}(t)$, we have
\begin{equation}\label{eq:upper-bound}
\mathbb{P}(N_t \le m-1) \le \exp\Bigl(-\frac{a^2}{2t}\Bigr),\quad a = 3\sqrt{mL_\epsilon}+1.
\end{equation}
Since $a^2 \ge 9mL_\epsilon$ and $L_\epsilon \le m$ implies $t \le 4m$, the exponent in~\cref{eq:upper-bound} satisfies
\[
    \frac{a^2}{2t} \ge \frac{9mL_\epsilon}{8m} > L_\epsilon,
\]
which implies that $\mathbb{P}(N_t \le m-1) \le e^{-L_\epsilon}=\epsilon$. This gives an upper bound on the mixing time of the modified population dynamics in the regime $L_\epsilon \le m$:
\begin{equation}
    t_{\mathrm{mix}}(\epsilon) \le m + 3\sqrt{m \log(1/\epsilon)}.
\end{equation}

\begin{figure}
    \centering
    \includegraphics[width=0.5\linewidth]{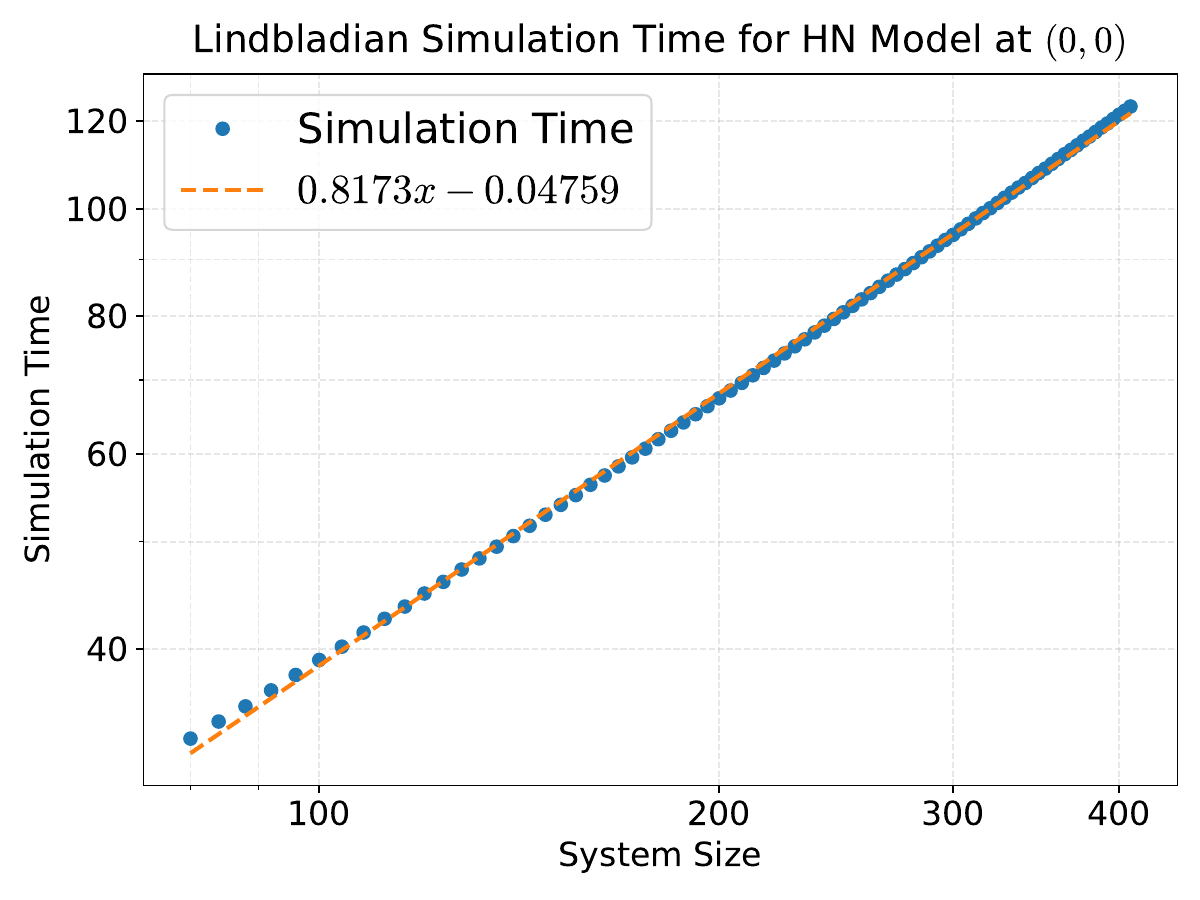}
    \caption{Scaling behavior of the Lindbladian simulation time for the Hatano-Nelson model at $z = 0$ as system size increases, plotted on a log-log scale. The step size is set as $\tau = 0.1$ and the convergence threshold is $\left|\text{Tr}(H_{\text{HN}}^\dag H_{\text{HN}}\rho(t)) - E_0(H_{\text{HN}}^\dag H_{\text{HN}})\right| \leq 10^{-3}$. The overlaid first-degree polynomial fit demonstrates that the simulation time exhibits (sub-)linear growth.}
    \label{fig:hatano-nelson-pbc-zero-mixing-time}
\end{figure}

\section{Details on the non-Hermitian qubit system and IonQ implementation}\label{append:numerical}

\subsection{Spectral properties of the 1-qubit model}
Recall the 2-by-2 matrix with $g>0$:
\begin{equation}
    H(g) = X - i g Z = 
    \begin{bmatrix}
        -i g & 1 \\
        1 & i g
    \end{bmatrix}.
\end{equation}
The eigenvalues of the matrix are: $\lambda_{\pm} = \pm \sqrt{1-g^2}$. They are purely real for $g < 1$, and purely imaginary for $g > 1$. When $g = 1$,  the two eigenvalues coalesce, indicating an order-2 exceptional point (EP). 

We define $H_z \coloneqq (H(g) - zI)^\dagger (H(g) - zI)$. 
When $g = 1$, the eigenvalues of $H_z$ are $2 + |z|^2 \pm 2\sqrt{|z|^2+1}$, meaning that $\sigma_0(H(1)-zI) = \sqrt{|z|^2+1}-1$. So, the $\epsilon$-pseudospectrum of $H(1)$ is a disc centered at the origin with radius $r=\sqrt{2\epsilon+\epsilon^2}$:
\[\Lambda_\epsilon = \left\{z\in \mathbb{C}\colon \sigma_0(H(1)-zI) \le \epsilon\right\} = \left\{z\in \mathbb{C}\colon |z| \le \sqrt{\epsilon^2+2\epsilon}\right\}.\]
The radius of the pseudo-spectral cloud scales as $\mathcal{O}(\sqrt{\epsilon})$ for small $\epsilon$, which is a signature of an order-2 EP.

For $g \neq 1$, a direct calculation gives $\sigma_0(H(g)) = |1-g|$, so $0 \in \Lambda_\epsilon(H(g))$ if and only if $\epsilon \ge |1-g|$. Thus, the two eigenvalues remain separated in the pseudospectrum for tolerances below this scale, and they merge only when $\epsilon$ reaches order $|1-g|$.
\begin{figure}
    \centering
    \includegraphics[width=0.5\linewidth]{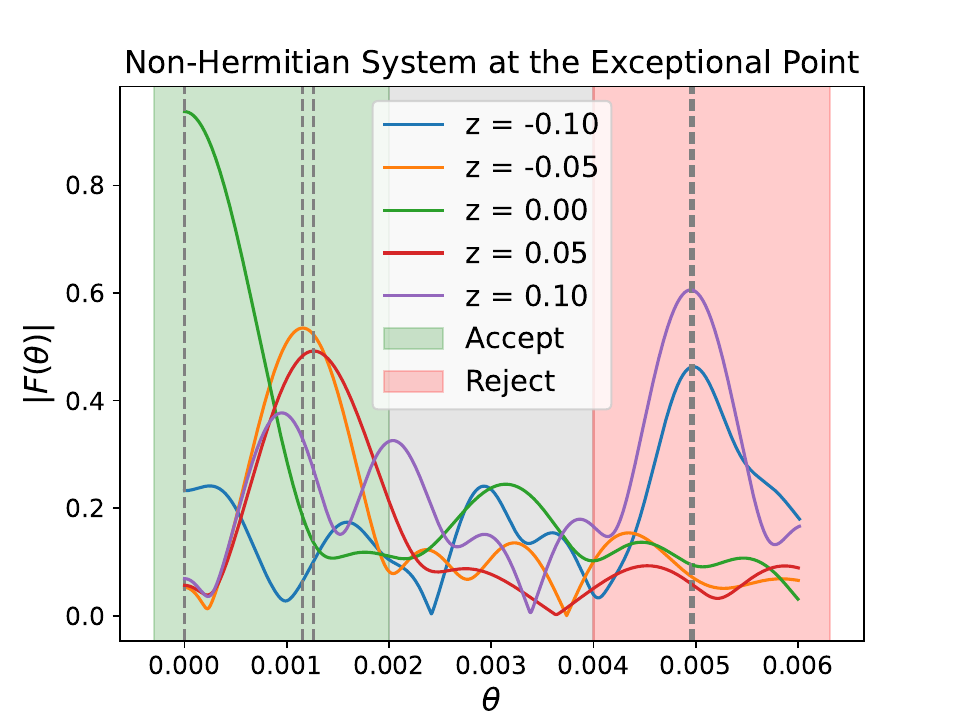}
    \caption{The filter functions in QSIGS constructed using the complete measurement data from IonQ for $z = -0.1, -0.05, 0, 0.05$ and $0.1$. The settings are similar to that in~\cref{fig:NH}(c). With fixed $\epsilon = 0.002$, we show $z = -0.05, 0, 0.05$ lie in the $\epsilon$-pseudospectrum and $z = -0.1$ and $z = 0.1$ do not.}
    \label{fig:nh-all-ionq}
\end{figure}

\subsection{Implementation of QSIGS}\label{append:implement-qsigs}

We use block-encodings to implement the quantum singular value transformation (QSVT) in QSIGS.
Note that a block-encoding can be easily constructed if the singular value decomposition is known~\cite{lin2022lecture}. For a general matrix $M$ ($\|M\| \leq 1$) along with its singular value decomposition $M = W \Sigma V^\dag$,
its block-encoding can be constructed as
\begin{equation}\label{eqn:block-encoding-svd}
    U = 
    \begin{bmatrix}
        M & W \sqrt{I - \Sigma^2}\\
        \sqrt{I - \Sigma^2} V^\dag & -\Sigma
    \end{bmatrix}.
\end{equation}
To execute the experiments, we start from a general matrix $A$ and a complex number $z$. 
We assume the matrix $A_z = A - zI$ has a singular value decomposition as $A_z = W \Sigma V^\dag$. We denote $\widetilde{\Sigma} = \Sigma/\sigma_{\max}$ where $\sigma_{\max}$ is the largest singular value of $A_z$. In order to satisfy the normalization constraint, we consider $\widetilde{A_z} := W \widetilde{\Sigma} V$. Then,~\cref{eqn:block-encoding-svd} gives us a block-encoding of $\widetilde{A_z}$ by introducing one ancilla qubit.
Here in the experiments, for each chosen $z$, we recorded the normalization constants $\sigma_{\max}$ and rescaled the $\theta$'s 
for getting a precise estimation of the ground singular values $\sigma_0(A_z)$.

For the $\sin$ transformation, the same procedure is applied to $W \sin\left(t\widetilde{\Sigma}\right) V^\dag$, which yields a block-encoding of $\sinsv(t\widetilde{A_z})$.
In our real-hardware experiment, we decompose the unitary block-encodings into elementary gates using~$\texttt{qiskit.transpile}$~\cite{qiskit2024} with optimization level 3.
The transpiled circuit (together with dissipative ground state preparation) uses $\sim$ 130 single-qubit gates and $\sim$ 20 $2$-qubit gates.

We used QSIGS to estimate the ground singular value of $A_z$. For the system $\widehat{H}(1)$, we tested 5 different $z$'s, namely $z = \pm0.1, \pm0.05, 0$.
We choose the distribution function $a(t) = \frac{1}{\sqrt{2\pi}T}\exp\left(\frac{-t^2}{2T^2}\right)$ with $T = 2000$. 

For each choice of $z$, we sampled $25$ time points $\{t_n\}$ according to $a(t)$ and then executed $25$ corresponding circuits on the IonQ device. For each circuit, we ran $2000$ shots; thus, for each choice of $z$, the total number of samples $\{Z_n\}$ is $5\times10^4$.
It is worth noting that this procedure differs slightly from~\cref{algo:qsigs}, where each $t_n$ produces a single sample $Z_n$. Here we reuse the same quantum circuit to generate multiple samples per $t_n$, which reduces the total number of circuit compilations and keeps the experiment budget manageable.
With these samples, we calculate the filter function $F(\theta)$ and seek for the maximizer of $|F(\theta)|$ (denoted as $\theta^*$) on $5\times 10^5$ uniformly sampled nodes on $[0, 6\times10^{-3}]$.
Assuming that the dissipative protocol prepares an initial state $\rho_z$ such that $p_0 = \Tr[\ketbra{\psi_z}{\psi_z}\rho_z]\ge 0.8$ (which is verified by numerical simulation),~\cref{thm:qsigs} suggests that with high probability, the ground singular value $\sigma_0$ falls into the confidence interval:
\begin{equation}\label{eqn:confidence-interval}
    \left[\theta^* - \frac{3q}{T},\theta^* + \frac{3q}{T}\right],\quad \frac{3q}{T} = \frac{3(\log(10/7)/2)^{1/2}}{2000} \approx 6\times 10^{-4}.
\end{equation}
In~\cref{table:numerical-result-tab}, we list the maximizers of $|F(\theta)|$ for each choice of $z$, together with the ground truth singular value $\sigma_0$ and the absolute error $|\theta^*-\sigma_0|$.
\begin{table}[ht!]
\centering
\scalebox{0.9}{
\begin{tabular}{|c|c|c|c|}
\hline
\textbf{$z$ values} & \textbf{Empirical Maximizer $\theta^*$} & \textbf{Ground singular value $\sigma_0$ (ground truth)} & \textbf{Difference $|\theta^* - \sigma_0|$~}\\
\hline
-0.1 & $4.97\times 10^{-3}$  & $4.99\times 10^{-3}$ & $2\times 10^{-5}$\\
-0.05 & $1.15\times 10^{-3}$  & $1.25\times 10^{-3}$ & $1\times 10^{-4}$\\
0 & $0$  & $0$ & $0$\\
0.05 & $1.26\times 10^{-3}$  & $1.25\times 10^{-3}$ & $1\times 10^{-5}$\\
0.1 & $4.97\times 10^{-3}$  & $4.99\times 10^{-3}$ & $2\times 10^{-5}$\\
\hline
\end{tabular}
}
\caption{Ground singular value estimate via QSIGS, with data collected on IonQ Forte.}
\label{table:numerical-result-tab}
\end{table}
The results indicate that, for each $z$, the interval $[\theta^*-3q/T,\theta^*+3q/T]$ lies entirely in either the ``accept'' or ``reject'' regime shown in~\cref{fig:nh-all-ionq}.
The last column of~\cref{table:numerical-result-tab} shows the estimation error (compared to the true ground singular value $\sigma_0$), which is always smaller than the reference scale $3q/T\approx 6\times 10^{-4}$.

\subsection{Implementation of the discrete-time dynamics}\label{append:implement-dsp}

To iteratively apply the quantum channel $\Phi_\tau[\cdot]$ to an initial state, we need to trace out the ancilla qubit in each iteration and reset the ancilla qubit to $\ket{0}$. This operation is referred to as a $\texttt{reset}$ gate in modern quantum hardware. Unfortunately, the reset operation is currently not supported by most commercial quantum computers (including IonQ) due to the requirement of mid-circuit measurement.
To circumvent this difficulty, it is possible to postpone all the mid-circuit measurements until the final measurement. 
We now explain the procedures in detail.

Suppose that we want to apply the quantum channel $\Phi_\tau[\cdot]$ twice to an initial state $\rho_0$. The resulting quantum operations are expressed by (for simplicity, we omit the unitary channel $\mathcal{U}_\tau$ as it does not affect the procedure):
\begin{equation}
    \Phi^{(2)}_\tau[\rho_0] = \Tr_b\left[e^{-i\sqrt{\tau}\widetilde{K}_b} \left(\ketbra{0}{0}_b \otimes \Tr_a\left[e^{-i\sqrt{\tau}\widetilde{K}_a} \left(\ketbra{0}{0}_a \otimes \rho_0\right) e^{i\sqrt{\tau}\widetilde{K}_a}\right]\right)e^{i\sqrt{\tau}\widetilde{K}_b}\right],
\end{equation}
where $a$, $b$ represent two independent ancilla qubits, and $\widetilde{K}_a$, $\widetilde{K}_b$ are Hermitian dilations of $K$ using ancilla $a$, $b$, respectively.
Note that the outer unitary $e^{-i\sqrt{\tau}\widetilde{K}_b}$ does not act on the ancilla qubit $a$, so the partial trace $\Tr_a[\cdot]$ can be interchanged with the outer unitary operation. It follows that we can introduce \textit{two} ancilla qubits for two iterations and trace them out at the end of the quantum circuit:
\begin{equation}
    \Phi^{(2)}_\tau[\rho_0] = \Tr_{ab}\left[e^{-i\sqrt{\tau}\widetilde{K}_b} e^{-i\sqrt{\tau}\widetilde{K}_a} \left(\ketbra{00}{00}_{ab} \otimes \rho_0\right) e^{i\sqrt{\tau}\widetilde{K}_a}e^{i\sqrt{\tau}\widetilde{K}_b}\right],
\end{equation}

In general, to apply $N$ iterations of $\Phi_\tau[\cdot]$, we can introduce $N$ ancilla qubits (indexed by $a_k$ for $1 \le k \le N$) and perform a final measurement. If the quantum channel $\Phi_\tau[\cdot]$ has a unique stationary state and a positive spectral gap, for any initial state such as $\rho_0 = \ketbra{0}{0}$, the final state $\Phi^{(N)}[\rho_0]$ should be close to the ground state of $H$. In other words, the final state in the quantum circuit is essentially a product state:
\begin{equation}
    \prod_{k=N}^{1} e^{-i \widetilde{K}_{a_k}\sqrt{\tau}}
    \left(
    \ket{0}_{a_N\cdots a_1}\otimes \ket{0}
    \right)
    \approx \ket{\text{Garbage}}_{a_N \cdots a_1} \otimes \ket{\lambda_0}.
\end{equation}
By discarding all measurement results in the ancilla register, we recover the (approximate) ground state of $H$.

To physically implement such a protocol on the IonQ machine, for each step, one needs to introduce one more
ancilla and apply the unitary $e^{-i \widetilde{K}\sqrt{\tau}}$ on both the target register and the ancilla.
The trace-out process is then replaced by discarding all previous ancilla qubits. 
Note that the ancillas are idling in the system rather than being reset back to a clean $\ket{0}$ state, they are subject to ongoing decoherence and can cause unwanted errors on the target qubits.
The characterization of such errors is beyond the scope of this paper, while the experiment results (see~\cref{table:numerical-result-tab}) suggest that our protocol is robust to machine decoherence when implemented on the IonQ device.
In~\cref{fig:1-qubit-eig-circuit}, we illustrate the quantum circuit that implements the quantum channel $\Phi_\tau[\cdot]$ five times, followed by a quantum singular value estimation via QSIGS. 

\begin{figure}[!ht]
    \centering
    \includegraphics[width=0.7\linewidth]{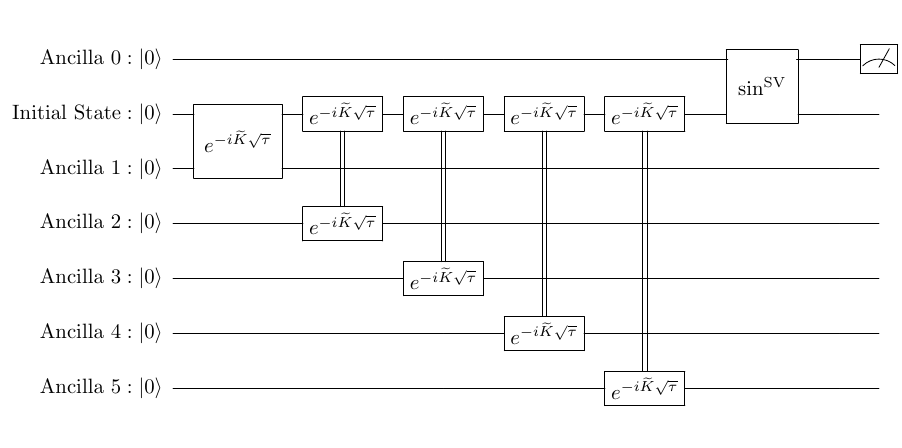}
    \caption{A circuit implementation example for an 1-qubit system.}
    \label{fig:1-qubit-eig-circuit}
\end{figure}

\end{document}